\documentclass{article}

\usepackage{arxiv}

\usepackage[utf8]{inputenc} 
\usepackage[T1]{fontenc}    
\usepackage{hyperref}       
\usepackage{url}            
\usepackage{booktabs}       
\usepackage{amsfonts}       
\usepackage{nicefrac}       
\usepackage{microtype}      
\usepackage{lipsum}
\usepackage{fancyhdr}       
\usepackage{graphicx}       
\graphicspath{{media/}}     
\usepackage{amsmath}
\usepackage{subcaption}
\usepackage{amssymb}
\usepackage{amsthm}
\usepackage{mathtools}
\usepackage{pdfpages}
\usepackage{float}
\newtheorem{theorem}{Theorem}

\newtheorem{lemma}{Lemma}
\newtheorem{definition}{Definition}
\newtheorem{remark}{Remark}

\newtheorem{assumption}{Assumption}
\newtheorem{problem}{Problem}


\usepackage{algorithm}
\usepackage{algpseudocode}
\newcommand\ignore[1]{{}}
\algrenewcommand\algorithmicrequire{\textbf{Input:}}
\algrenewcommand\algorithmicensure{\textbf{Output:}}
\pagestyle{fancy}
\thispagestyle{empty}
\rhead{ \textit{ }} 
\DeclareUnicodeCharacter{221E}{-}
\usepackage{mathrsfs}


\title{Model-Free Incremental Adaptive Dynamic Programming Based Approximate Robust Optimal Regulation 
}

\author{
  Cong Li \\
     \texttt{cong.li@tum.de} \\
  \And
 Yongchao Wang \\
 \texttt{yongchao.wang@tum.de} \\
    \And
     Fangzhou Liu \\
 \texttt{fangzhou.liu@tum.de} \\
     \And
     Qingchen Liu \\
 \texttt{qingchen.liu@tum.de} \\
   \And
 Martin Buss \\
\texttt{mb@tum.de} \\
}

\begin{document}
\maketitle

\begin{abstract}
This paper presents a new formulation for model-free robust optimal regulation of continuous-time nonlinear systems. The proposed reinforcement learning based approach, referred to as incremental adaptive dynamic programming (IADP), exploits measured data to allow the design of the approximate optimal incremental control strategy, which stabilizes the controlled system incrementally under model uncertainties, environmental disturbances, and input saturation. By leveraging the time delay estimation (TDE) technique, we first exploit sensory data to reduce the requirement of a complete dynamics, where measured data are adopted to construct an incremental dynamics that reflects the system evolution in an incremental form. Then, the resulting incremental dynamics serves to design the approximate optimal incremental control strategy based on adaptive dynamic programming, which is implemented as a simplified single critic structure to get the approximate solution to the value function of the Hamilton-Jacobi-Bellman equation. Furthermore, for the critic artificial neural network, experience data are used to design an off-policy weight update law with guaranteed weight convergence. Rather importantly, to address the unintentionally introduced TDE error, we incorporate a TDE error bound related term into the cost function, whereby the TDE error is attenuated during the optimization process.
The system stability proof and the weight convergence proof are provided. Numerical simulations are conducted to validate the effectiveness and superiority of our proposed IADP, especially regarding the reduced control energy expenditure and the enhanced robustness.
\end{abstract}

\keywords{reinforcement learning \and incremental adaptive dynamic programming \and time delay estimation \and robust optimal regulation}

\section{Introduction}
Reinforcement learning (RL) provides a mathematical formalism for learning-based control strategies and has shown superior performance in multiple scenarios, e.g., robots \cite{kober2013reinforcement}, unmanned aerial vehicles \cite{koch2019reinforcement}, and autonomous driving \cite{kuutti2020survey}, etc.
However, although the distinguishable model-free feature of RL encourages researchers to adopt RL-based control strategies to overcome the difficulty of applying traditional model-based control methods to the unknown (or hardly modeled) plants, the rigorous system stability analysis is not provided in most of the related works, see \cite{recht2019tour, bucsoniu2018reinforcement} and the references therein. A system without stability guarantee may be potentially dangerous \cite{khalil2002nonlinear}.
Recently, synchronous adaptive dynamic programming (ADP) \cite{vamvoudakis2010online,kamalapurkar2015approximate,vamvoudakis2014online}, where actor and critic artificial neural networks (ANNs) update simultaneously in real-time, emerges as a promising control-theoretic RL subfield featured for available system stability proofs. 
However, its provided stability proof trades off the attractive model-free feature of RL since the dynamics is required to present the rigorous system stability analysis. 
Even though the required explicit knowledge of dynamics could be avoided by using
add-on techniques such as NNs \cite{bhasin2013novel,zhao2019event,zhang2011data}, fuzzy models \cite{su2019adaptive}, Gaussian process (GP) \cite{boedecker2014approximate}, or observers \cite{sun2018disturbance}, however, the accompanying identification processes further increase computational complexity and parameter tuning efforts.
This motivates us to develop a novel computationally simple RL-based control strategy, which enjoys both a model-free feature and a provable system stability guarantee, to accomplish the robust optimal stabilization of continuous-time nonlinear systems.
Many attempts have been conducted to enhance the robustness of synchronous ADP.
To approximately solve the robust optimization problem of a completely known dynamics perturbed by an unknown but (assumed) bounded additive disturbance, existing synchronous ADP related approaches are divided into two categories: the zero-sum game method based on the H-infinity theory \cite{vamvoudakis2012online}, and the transformed optimal control method with a well-designed utility function \cite{liu2015reinforcement}.
However, both methods require accurate model information to construct the corresponding Hamilton-Jacobi-Issac (HJI) or Hamilton-Jacobi-Bellman (HJB) equations. Moreover, the worst-case disturbance related terms incorporated into the cost functions usually inevitably result in conservative control policies that lead to reduced performance, especially when controlled plants suffer from disturbances with occasional high peaks. In addition, the transformed optimal control method further demands the structure knowledge of the disturbance, i.e., the disturbance bound.
To obviate the requirement of an accurate drift dynamics, by using the defined integral reinforcement, integral RL is developed to allow the design of a partially model-free approximate optimal control strategy  \cite{vrabie2009neural}. Whereas, the complete knowledge of an input dynamics is still demanded.
A further step to get rid of model information is to exploit NN based approximation schemes such as ANNs \cite{bhasin2013novel}, radial basis function neural networks (RBFNNs) \cite{zhao2019event}, and recurrent neural networks (RNNs) \cite{zhang2011data}, where the dynamics is approximated as a linear weighting of handpicked basis sets.
Albeit the model-free control is achieved based on the universal approximation ability of NNs, it is not trivial to get a high-quality approximated model based on an additionally introduced weight update law. The control strategy designed based on inaccurate approximated models might lead to performance degradation or even instability.
Moreover, the effectiveness of these plug-in methods \cite{bhasin2013novel, zhao2019event,zhang2011data} highly relies on prior experience. For example, the choice of radial basis functions needs trials and errors until the approximation error is under a given threshold \cite{zhao2019event}.
The aforementioned deficiencies such as possible inaccurate approximated models and high reliances on prior experience also exist in the fuzzy model based work \cite{su2019adaptive} to avoid using model information.
In addition, GP \cite{boedecker2014approximate} or observer \cite{sun2018disturbance} based methods are also widely used to deal with model uncertainties and/or environmental disturbances. Although efficient, these add-on methods \cite{boedecker2014approximate,sun2018disturbance} suffer from high computation complexity and parameter tuning efforts due to additionally incorporated identification processes. 
The counterpart to our mainly focused synchronous ADP is the so-called iterative ADP \cite{liu2013policy,sokolov2015complete,al2019online}, which sequentially updates actor and critic ANNs (i.e., one ANN is tuned, and the other holds constant). Although this method enjoys the model-free property for discrete-time systems, however, its extension to continuous-time systems entails challenges in proving system stability and ensuring that the algorithm is online and model-free \cite{bhasin2013novel}.

Among the aforementioned robust synchronous ADP related works, either complete \cite{vamvoudakis2012online,liu2015reinforcement} or partial model knowledge \cite{vrabie2009neural} is required.
The desired model-free control is accomplished by introducing additional techniques, e.g., NNs \cite{bhasin2013novel,zhao2019event,zhang2011data}, fuzzy models \cite{su2019adaptive}, GP \cite{boedecker2014approximate}, or observers \cite{sun2018disturbance}, where the dynamics is required to be identified online explicitly.  
Unlike these computation-intensive approaches, time delay estimation (TDE) \cite{hsia1990robot,youcef1992input} is a fundamentally different mechanism to facilitate model-free control strategies, 
where an incremental dynamics constructed by time-delayed signals is used to reflect the system evolution of the controlled system incrementally without introducing any online identification processes. 
However, despite its promising robustness feature and beneficial computation simplicity, the optimality property of TDE based methods remains to be investigated.
The implementation of TDE unintentionally introduces the TDE error, which denotes the gap between the real system and the constructed incremental dynamics.  
Although the boundness property of this TDE error is analyzed in traditional TDE related works \cite{hsia1990robot,youcef1992input}, its influence on the controller performance is overlooked.
A fundamental problem about addressing the TDE error has yet to be properly established. 
The idea of exploiting sensory data to facilitate the model-free approximate optimal control strategy originates in \cite{zhou2016nonlinear,zhou2018incremental,zhou2020incremental} where the Taylor series expansion based incremental control technique is used to reduce dependence on the explicit knowledge of dynamics. 
However, no system stability is presented in related works \cite{zhou2016nonlinear,zhou2018incremental,zhou2020incremental}.
Besides, although this method avoids identifying a global system model, a recursive least square method is still required to identify local system transition matrices, which introduces additional computation load.

This paper proposes a fundamentally different approach to achieve model-free control with guaranteed stability and optimality. 
This is accomplished by first leveraging TDE to get an equivalent incremental dynamics to the investigated dynamics, which alleviates the need for the online identification process, as well as its accompanying computation complexity and parameter tuning efforts.
Then, the resulting incremental dynamics serves as a basis to allow the design of the
model-free approximate optimal incremental control strategy.
The contribution of this work is summarized as follows.
\begin{enumerate}
\item We develop a novel RL augmented control approach, which is called IADP, that enjoys both model-free feature and guaranteed closed-loop system stability.
More importantly, IADP accomplishes a significant reduction in the control energy expenditure, which enables it to be favourable to energy-limited platforms.
\item Under an optimization framework, performance indexes regarding state deviations and control energy expenditure are considered. Thus, we endow TDE based methods with the optimality property. Besides, by incorporating a TDE error bound related term into the cost function, we novelly attenuate the TDE error during the optimization process.
\end{enumerate}

The remainder of this paper is organized as follows: problem formulation of the robust stabilization problem, problem transformation to the optimal incremental control problem, and problem equivalence proofs are provided in Section \ref{Section problem formu}. Thereafter, we present the approximate optimal solution in Section \ref{sect NN update law}. Numerical simulation results shown in Section \ref{sec simulation} demonstrate the effectiveness and superiority of IADP. Finally, Section \ref{sec conclusion} concludes this work.

\emph{Notations}
Throughout this paper, $\mathbb{R}$ ($\mathbb{R}^{+}$) denotes the set of real (positive) numbers; $\mathbb{R}^{n}$ is the Euclidean space of $n$-dimensional real vector; $\mathbb{R}^{n \times m}$ is the Euclidean space of $n \times m$ real matrices; $I_{m \times m}$ represents the identity matrix with dimension $m \times m$; 
$\lambda_{\min}(M)$ and $\lambda_{\max}(M)$ are the maximum and minimum eigenvalues of a symmetric matrix $M$, respectively; 
The $i$-th entry of a vector $x = [x_{1},...,x_{n}]^{\top}\in \mathbb{R}^{n}$ is denoted by $x_{i}$, and $\left\| x \right\| = \sqrt{\sum_{i=1}^{N}|x_{i}|^2}$ is the Euclidean norm of the vector $x$;
The $ij$-th entry of a matrix $D \in \mathbb{R}^{n \times m}$ is denoted by $d_{ij}$, and $\left\|D\right\| = \sqrt{\sum_{i=1}^{n}\sum_{j=1}^{m}|d_{ij}|^2}$ is the Frobenius norm of the matrix $D$. For notational brevity, time-dependence is suppressed without causing ambiguity. 
\section{Problem formulation} \label{Section problem formu}
Considering the following continuous-time nonlinear system:
\begin{equation}\label{sys}
\dot{x} = f(x) + g(x)u(x) +d(t),
\end{equation}
where $x \in \mathbb{R}^{n}$, $u(x) \in \mathbb{R}^{m}$ are system states and inputs, respectively. 
$f(x) : \mathbb{R}^{n} \to \mathbb{R}^{n}$, $g(x) : \mathbb{R}^{n} \to \mathbb{R}^{n \times m}$ are continuous and locally Lipschitz drift and input dynamics, respectively.
$d(t) \in \mathbb{R}^{n}$ represents a bounded time-varying external disturbance.
Assuming that no knowledge of the dynamics \eqref{sys} is available expect for the dimension of system states and inputs.

The main objective of this paper is to tackle the robust stabilization problem of the highly uncertain dynamics \eqref{sys} that operates in a disturbed environment, which is formulated as Problem \ref{Robust stabilization}.
\begin{problem} \label{Robust stabilization}
    Design a control strategy $u(x)$ such that the system \eqref{sys} is stable under input saturation
    $\mathbb{U}_j = \left\{u_j \in \mathbb{R} : \left| u_{j} \right| \leq \beta \right\}, j = 1, \cdots, m$, where $\beta \in \mathbb{R}^{+}$ is a known saturation bound.
\end{problem}
\begin{remark}
Note that although the explicit form of the controlled system \eqref{sys} is provided here, which is introduced for the analytical purpose and facilitates the controller design as well as the stability analysis in the following sections, our developed control approach relies on neither model parameters nor environmental information.
\end{remark}

\subsection{Incremental dynamics} \label{Sec incrmental dyna}
The highly uncertain dynamics \eqref{sys} cannot be directly used to design a controller to solve Problem \ref{Robust stabilization}. Therefore, based on measured input-state data, this section leverages the TDE technique to get an incremental dynamics that is an equivalent of \eqref{sys}. 
This formulated incremental dynamics reflects the system response of the controlled system \eqref{sys} incrementally without using any explicit model parameters, or any preceding identification procedures.
Here, the attempt to relieve dependence on the accurate knowledge of dynamics departs from existing works where additional computation-intensive tools such as NNs \cite{bhasin2013novel,zhao2019event,zhang2011data}, fuzzy models \cite{su2019adaptive}, GP \cite{boedecker2014approximate}, or observers \cite{sun2018disturbance} are required to address model uncertainties and/or environmental disturbances. The constructed incremental dynamics in this section serves as a basis for the development of the desired model-free control strategy and the rigorous closed-loop system stability analysis in the following sections.

Before proceeding, the following assumption is provided to facilitate the formulation of an incremental dynamics.
\begin{assumption} \cite{kamalapurkar2015approximate}\label{asp of g}
 The input dynamics $g(x)$ is bounded, the matrix $g(x)$ has full column rank $\forall x \in \mathbb{R}^n$, and $g^{+}=(g^{\top}g)^{-1}g^{\top} \in \mathbb{R}^{m \times n}$ is bounded and Lipschitz continuous. 
\end{assumption}
\begin{remark}
Assumption \ref{asp of g} is not restrictive and common in ADP related works \cite{kamalapurkar2015approximate,kiumarsi2014actor}. Here,
the introduced $g^{+}$ is used to extend the TDE method usually applied to the Euler-Lagrange equation \cite{hsia1990robot,youcef1992input} to the general nonlinear system \eqref{sys}.
\end{remark}

To get the incremental dynamics, we start with introducing a constant matrix $\bar{g} \in \mathbb{R}^{n \times m}$ and multiply $\bar{g}^{+}$ on the dynamics \eqref{sys},
\begin{equation}\label{sys gbar}
\bar{g}^{+} \dot{x} 
= \bar{g}^{+} f(x)  + \bar{g}^{+}g(x)u(x) +\bar{g}^{+} d(t)
= H(x,\dot{x},u(x)) + u(x),
\end{equation}
where  $H(x,\dot{x},u(x))  = (\bar{g}^{+}-g^{+}(x))\dot{x}+g^{+}(x)f(x)+g^{+}(x)d(t) : \mathbb{R}^{n} \times \mathbb{R}^{n} \times \mathbb{R}^{m} \to \mathbb{R}^{m}$. It is a lump term that embodies all the unknown model knowledge (i.e., $f(x)$, $g(x)$) as well as external disturbances (i.e., $d(t)$).
Based on the TDE technique \cite{hsia1990robot,youcef1992input}, with a sufficiently high sampling rate, the unknown $H(x,\dot{x},u(x))$ in \eqref{sys gbar} could be estimated by time-delayed signals as 
\begin{equation}\label{TDE H}
\hat{H}(x,\dot{x},u(x)) = H(x_0,\dot{x}_0,u_0) = (\bar{g}^{+}-g^{+}_0)\dot{x}_0+g^{+}_0f_0+g^{+}_0 d_0, 
\end{equation}
where ${x}_0 = x(t-L)$, $u_0 = u(x(t-L))$, $g_0 = g(x(t-L))$, $f_0 = f(x(t-L))$, and $d_0 = d(t-L)$ with $L \in \mathbb{R}^+$. Specifically, the TDE of $H(x,\dot{x},u(x))$ is represented by the measured data at time $t-L$. 
\begin{remark}
In practice, a digital control system behaves reasonably close to a continuous system if the sampling rate is faster than $30$ times the system bandwidth \cite{franklin1998digital}. Therefore, with a sufficiently high sampling rate, based on TDE, the unknown lumped nonlinear function $H(x,\dot{x},u(x))$ could be estimated by reusing past measured input-state data.
\end{remark}
Then, by substituting \eqref{TDE H} into \eqref{sys gbar}, we get 
\begin{eqnarray}\label{sys gbar Hbar}
\bar{g}^{+} \dot{x} 
=  \hat{H}(x,\dot{x},u(x)) + u(x) + H(x,\dot{x},u(x))-\hat{H}(x,\dot{x},u(x))
= H(x_0,\dot{x}_0,u_0) + u(x) + \xi,
\end{eqnarray}
where $\xi = H(x,\dot{x},u(x))-\hat{H}(x,\dot{x},u(x)) \in \mathbb{R}^{m}$ denotes the so-called TDE error, which is proved to be bounded as given in Lemma \ref{boud of TDE AE}. Comparing \eqref{sys gbar} with \eqref{sys gbar Hbar}, we know that the unknown lumped term $H(x,\dot{x},u(x))$ has been reconstructed from the measured data based $H(x_0,\dot{x}_0,u_0)$ plus the TDE error $\xi$.

In addition, at time $t-L$, \eqref{sys gbar} follows here
\begin{equation}\label{sys original step}
    \bar{g}^{+} \dot{x}_0 = H(x_0,\dot{x}_0,u_0) + u_0.
\end{equation}

Finally, subtracting \eqref{sys original step} from \eqref{sys gbar Hbar},  we get the incremental dynamics as
\begin{equation}\label{sys incrmental form}
    \Delta \dot{x} = \bar{g}\Delta u + \bar{g}\xi,
\end{equation}
where $\Delta \dot{x} = \dot{x} -\dot{x}_0 \in \mathbb{R}^{n}$, $\Delta u = u(x)-u_0 \in \mathbb{R}^{m}$.
Here, with a predefined $\bar{g}$, measured input-state data are adopted to reflect the system response in an incremental way without using any model or environmental information.

\begin{remark}
The TDE technique, which is usually used in the robotic filed \cite{hsia1990robot,youcef1992input}, is extended to the general continuous-time nonlinear system in this section. 
From a practical perspective, the applied TDE technique enables us to switch from the requirement of accurate mathematical models to sensor capabilities of providing accurate measurements of $\Delta \dot{x}$ (constructed from $\dot{x}$ and $\dot{x}_0$) and $\Delta u$ (constructed from $u(x)$ and $u_0$). For cases where $\dot{x}$ and $\dot{x}_0$ are not directly measurable,
multiple state derivative estimation techniques developed in previous works \cite{bhasin2012robust, levant1998robust} could help, which is out of the scope of this paper and remains as future works.
\end{remark}
However, although an equivalent of \eqref{sys} is provided in \eqref{sys incrmental form} without using any explicit knowledge of dynamics, the unknown TDE error $\xi$ hinders us to directly utilize \eqref{sys incrmental form} to design controllers.
Therefore, a method will be developed to address the TDE error $\xi$ in the next section. Before proceeding, here we first provide the theoretical analysis about the boundness property of $\xi$, which facilitates the method to tackle the TDE error $\xi$ under an optimization framework in Section \ref{Sec problem formulation}.
\begin{lemma} \label{boud of TDE AE}
Given a sufficiently high sampling rate, $\exists \Bar{\xi} \in \mathbb{R}^+$, there holds $\left\|\xi\right\| \leq \Bar{\xi}$. 
\end{lemma}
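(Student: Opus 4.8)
The plan is to recognize that the TDE error is simply the temporal increment of the lumped nonlinear term $H$ across one delay interval, i.e., $\xi = H(x,\dot{x},u(x)) - H(x_0,\dot{x}_0,u_0) = H(t) - H(t-L)$, and to control this increment via the smoothness of $H$ together with the smallness of $L$ guaranteed by the high sampling rate. Concretely, I would write $\xi = \int_{t-L}^{t}\dot{H}(s)\,ds$ by the fundamental theorem of calculus, so that $\|\xi\| \le \int_{t-L}^{t}\|\dot{H}(s)\|\,ds \le L\sup_{s}\|\dot{H}(s)\|$. The claim $\|\xi\|\le\bar{\xi}$ then reduces to showing that $\dot{H}$ remains uniformly bounded, whereupon one takes $\bar{\xi}=L\sup_{s}\|\dot{H}(s)\|$.

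The bulk of the work lies in bounding $\dot{H}$. First I would differentiate the explicit expression $H = (\bar{g}^{+}-g^{+}(x))\dot{x}+g^{+}(x)f(x)+g^{+}(x)d(t)$ with respect to time, noting that $\bar{g}^{+}$ is constant, to obtain terms involving $g^{+}(x)$, its time derivative $\tfrac{\partial g^{+}}{\partial x}\dot{x}$, the drift $f(x)$ and $\tfrac{\partial f}{\partial x}\dot{x}$, the signals $\dot{x}$ and $\ddot{x}$, and the disturbance $d(t)$ together with its rate $\dot{d}(t)$. Each factor is then bounded term by term: Assumption \ref{asp of g} supplies boundedness and Lipschitz continuity of $g^{+}$, hence a bounded Jacobian $\tfrac{\partial g^{+}}{\partial x}$; the continuity and local Lipschitzness of $f$ on the compact operating region bound $f(x)$ and $\tfrac{\partial f}{\partial x}$; and the standing assumptions bound $d(t)$ and its derivative. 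Provided the state, input, and their derivatives stay confined to a compact set during operation, every summand of $\dot{H}$ is bounded, so $\|\dot{H}\|\le M$ for some $M\in\mathbb{R}^{+}$, and the claim follows with $\bar{\xi}=LM$.

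The main obstacle I anticipate is not the algebra but justifying the boundedness of the signals feeding into $\dot{H}$ — in particular $\ddot{x}$ and $\dot{d}$ — without circularity, since signal boundedness is ultimately underwritten by the very stability result this bound is meant to support. I would resolve this by restricting attention to a compact operating region (so that $f$, $g^{+}$, and their Jacobians are automatically bounded by continuity) and by adopting the customary TDE standing conditions that $d(t)$ is continuously differentiable with bounded rate and that the trajectories evolve smoothly, exactly the regime assumed in the classical TDE analyses \cite{hsia1990robot,youcef1992input}. This is precisely what the phrase ``sufficiently high sampling rate'' encodes: it forces $L$ to be small, which both keeps the increment $\|\xi\|\le L\sup_{s}\|\dot{H}(s)\|$ tight and renders the linearization regime meaningful, yielding the desired constant $\bar{\xi}$.
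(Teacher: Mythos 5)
Your route — writing $\xi = H(t)-H(t-L) = \int_{t-L}^{t}\dot{H}(s)\,ds$ and bounding it by $L\sup_s\|\dot H(s)\|$ — is genuinely different from the paper's, and it has concrete gaps under the paper's hypotheses. First, it requires $H$ to be absolutely continuous in time with bounded derivative, which in turn requires $\dot d(t)$ to exist and be bounded. The paper only assumes $d(t)$ is \emph{bounded}, and indeed later tests the controller against a square-wave disturbance, for which $\dot H$ does not exist; your argument collapses there, whereas a difference-based bound survives because $\|d(t)-d(t-L)\|\le 2\sup\|d\|$ needs no smoothness. Second, your bound on $\dot H$ needs $\ddot x$ (hence $\dot u$) to be bounded, which is exactly the circularity you flag: you propose to resolve it by positing a compact operating region, but that is underwritten by the stability result this lemma feeds, so the resolution is asserted rather than obtained.

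The paper instead proceeds purely algebraically: it expands $\xi = H(x,\dot x,u)-H(x_0,\dot x_0,u_0)$, substitutes the dynamics for $\Delta\dot x$, and isolates the decomposition $\xi = (\bar g^{+}g(x)-I_{m\times m})\Delta u + \delta_1$, where $\delta_1$ collects only one-step differences $g(x)-g_0$, $f(x)-f_0$, $d(t)-d_0$. The first term is \emph{not} small as $L\to 0$ in general; it is controlled by the hard input saturation ($\|\Delta u\|\le 2\beta$) and by the design freedom in $\bar g$ (choosing $\|\bar g^{+}g(x)-I\|\le c$), yielding $\xi\le c\|\Delta u\|+\bar\delta_1$. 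This $\Delta u$-affine structure is not cosmetic: Theorem~\ref{theorem equalivance} dominates the TDE error with the term $\bar\xi_o=\bar c\|\Delta u\|$ in the cost, and your constant bound $LM$ carries no relation to $\|\Delta u\|$, so it would not plug into that later argument. If you want to salvage your approach, you would at minimum need to separate out the control-increment contribution before applying any smallness-of-$L$ reasoning, and weaken the smoothness demands on $d$ to mere boundedness of increments.
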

\begin{proof}
Combining \eqref{sys gbar} with \eqref{TDE H}, the TDE  error follows
\begin{eqnarray}\label{AE bound 1}
\xi 
&=& H(x,\dot{x},u(x))-\hat{H}(x,\dot{x},u(x)) = H(x,\dot{x},u(x)) - H(x_0,\dot{x}_0,u_0)\nonumber \\
&=& (\bar{g}^{+}-g^{+}(x))(\dot{x}-\dot{x}_0) + (g^{+}_0 - g^{+}(x)) \dot{x}_0 + g^{+}(x)f(x)-g^{+}_0 f_0 \nonumber
+ g^{+}(x)d(t)-g^{+}_0 d_0 \nonumber \\
&=&  (\bar{g}^{+}-g^{+}(x)) \Delta \dot{x} + (g^{+}_0 - g^{+}(x)) \dot{x}_0 + g^{+}(x) (f(x)-f_0)+ (g^{+}(x)-g^{+}_0)f_0+g^{+}(x)(d(t)-d_0)+(g^{+}(x)-g^{+}_0) d_0.
\end{eqnarray}
Besides, based on the system \eqref{sys}, we get 
\begin{eqnarray}\label{AE bound 2}
\Delta \dot{x}
&=& f(x)+g(x)u(x)+d(t)-f_0-g_0u_0-d_0 \nonumber \\
&=& g(x)  \Delta u + (g(x)-g_0)u_0+  f(x) - f_0 + d(t)-d_0.
\end{eqnarray}
Then, substituting \eqref{AE bound 2} into \eqref{AE bound 1} yields
\begin{eqnarray}\label{AE bound 3}
\xi 
&=& (\bar{g}^{+}-g^{+}(x)) g(x)  \Delta u  +(\bar{g}^{+}-g^{+}(x)) [(g(x)-g_0)u_0+  f(x) - f_0 + d(t)-d_0]+(g^{+}_0 - g^{+}(x)) \dot{x}_0 \nonumber \\
&+& g^{+}(x) (f(x)-f_0)+ (g^{+}(x)-g^{+}_0)f_0+g^{+}(x)(d(t)-d_0)+(g^{+}(x)-g^{+}_0) d_0 \nonumber \\
&=& (\bar{g}^{+}g(x)-I_{m \times m})\Delta u+\delta_1 
\leq  \left\| \bar{g}^{+}g(x)-I_{m \times m} \right\| \left\| \Delta u\right\| +\left\| \delta_1\right\|,
\end{eqnarray}
where $\delta_1 = \bar{g}^{+}(g(x)-g_0)u_0+ \bar{g}^{+}(f(x)-f_0) + \bar{g}^{+}(d(t)-d_0) $.

For a sufficiently high sampling rate, the gap between successive states is sufficiently small.
Thus, it is reasonable to assume that there exists a positive constant 
$\bar{\delta}_1 \in \mathbb{R}^+$ such that $\left\| \delta_1 \right\|\leq \bar{\delta}_1$. In addition, the bounded control input $u$ implies that $\left\| \Delta u\right\|\leq 2 \beta$ holds. By choosing a suitable $\bar{g}$ such that $\left\| \bar{g}^{+}g(x)-I_{m \times m} \right\|\leq c$ establishes, then we get
\begin{equation} \label{TDE AE bound final}
    \xi \leq c \left\| \Delta u\right\|+ \bar{\delta}_1 \leq 2\beta c+ \bar{\delta}_1 = \bar{\xi}.
\end{equation}
This concludes the proof.
\end{proof}

\begin{remark}
By using the Taylor series expansion based incremental control technique, previous works \cite{zhou2016nonlinear,zhou2018incremental,zhou2020incremental,acquatella2013incremental,simplicio2013acceleration} attempt to provide the incremental dynamics by offering the first-order approximation of $\dot{x}$ in the neighbourhood of $[x_0,u_0]$. It follows
\begin{eqnarray}\label{Taylor series}
\dot{x} 
&=& f(x) + g(x) u(x) \nonumber\\
&=& f_0 + g_0 u_0 + \frac{\partial [f(x)+g(x)u(x)]}{\partial x}|_{x = x_0, u=u_0} (x-x_0) + \frac{\partial [f(x)+g(x)u(x)]}{\partial u}|_{x = x_0, u=u_0} (u-u_0) + \mathcal{H.O.T}. \nonumber\\
& \cong & \dot{x}_0 + F[x_0,u_0] \Delta x + G[x_0,u_0] \Delta u \nonumber,
\end{eqnarray}
where $F[x_0,u_0] = [\partial (f(x)+g(x)u(x))/\partial x]|_{x = x_0, u=u_0} \in \mathbb{R}^{n\times n}$ is the system matrix, and $G[x_0,u_0] = [\partial (f(x)+g(x)u(x))/\partial u]|_{x = x_0, u=u_0} \in \mathbb{R}^{n\times m}$ is the control effectiveness matrix.
However, the approximation error resulting from the high order term $\mathcal{H.O.T}$ is directly omitted without considering its influence on the controller performance. Furthermore, a recursive least square method is demanded to search for suitable gain matrices $F[x_0,u_0]$ and $G[x_0,u_0]$ to construct the incremental dynamics \cite{zhou2016nonlinear,zhou2018incremental,zhou2020incremental}. This required online identification of $F[x_0,u_0]$ and $G[x_0,u_0]$ introduces additional computational burden.
\end{remark}
\subsection{Problem transformation to optimal incremental control} \label{Sec problem formulation}
To address the unknown TDE error in the incremental dynamics \eqref{sys incrmental form}, here we attempt to investigate the original robust stabilization problem shown as Problem \ref{Robust stabilization} from an optimal control perspective, whereby the TDE error could be reflected in the performance index and further be attenuated during the optimization process. This departs from existing TDE related works \cite{hsia1990robot,youcef1992input,zhou2016nonlinear,zhou2018incremental,zhou2020incremental,acquatella2013incremental,simplicio2013acceleration} that directly ignore the influence of the TDE error on the controller performance. 
Moreover, the effort to solve Problem \ref{Robust stabilization} under an optimization framework enables us to take the desired performance indexes regarding state deviations and control energy expenditure into consideration. These considered performance indexes endow the resulting TDE based model-free control strategy with guaranteed optimality.

The TDE error $\xi$ in \eqref{sys incrmental form} is unknown. Thus, the available incremental dynamics to design a controller to solve Problem \ref{Robust stabilization} follows
\begin{equation}\label{sys incrmental form without epsilon}
   \Delta \dot{x} = \bar{g}\Delta u.
\end{equation}
To attenuate the TDE error $\xi$ that is overlooked in \eqref{sys incrmental form without epsilon}, as well as to optimize the performance of states and control inputs,
we consider the cost function of \eqref{sys incrmental form without epsilon} as
\begin{equation}\label{cost fuction}
    V(x(t)) = \int_{t}^{\infty} r(x(\tau),\Delta u(\tau))\,d\tau,
\end{equation}
where $r(x,\Delta u) = x^{\top} Q x + \mathcal{W}(u_0+\Delta u) + \Bar{\xi}^2_o$. 
The common squared term $x^{\top} Q x$ reflects users' preference for the controller performance concerning state deviations, where $Q \in \mathbb{R}^{n \times n}$ is a positive definite matrix.
The non-squared control penalty function $\mathcal{W}(u_0+\Delta u)$, which relates to the measured $u_0$ and to be designed $\Delta u$, is introduced to enforce the control limit on $u(x)$ based on the bounded $\tanh$ function. The explicit form of this part follows \cite{abu2005nearly}
\begin{equation}\label{control cost function}
    \mathcal{W}(u_0+\Delta u) =  2 \sum_{j=1}^{m}\int_{0}^{u_{0_{j}}+\Delta u_j}  \beta \tanh^{-1}(\vartheta_j / \beta)  \,d\vartheta_j.
\end{equation}
Originally, we could incorporate the squared TDE error bound $\Bar{\xi}^2$ into $r(x,\Delta u)$ to attenuate the TDE error $\xi$ during the optimization process. However, according to \eqref{TDE AE bound final} of Lemma \ref{boud of TDE AE}, the explicit value of $\Bar{\xi}$ is unknown. 
Thus, we seek for $\Bar{\xi}^2_o$, where $\Bar{\xi}_o = \bar{c}  \left\| \Delta u\right\|$ and  $\bar{c} \in \mathbb{R}^+$ is chosen as illustrated in Theorem \ref{theorem equalivance}, to replace $\Bar{\xi}^2$ to accomplish the same goal.
It is worth noting that the designed utility function $r(x,\Delta u)$ here enables us to perform the optimization of incremental control inputs. This achievable optimization of  incremental control inputs enables IADP to enjoy control effort reductions, which will be verified in the simulation part. This is one distinguishing feature of our proposed IADP.

 \begin{remark}
 Note that there exist other options to address the TDE error $\xi$. For example, by treating the unknown TDE error $\xi$ in \eqref{sys incrmental form} as a kind of disturbance, we can introduce the widely used disturbance-observer based methods \cite{chen2015disturbance} or sliding mode control methods \cite{shtessel2014sliding} to compensate the TDE error $\xi$. Comparing to these add-on methods, our strategy enjoys computation simplicity.
 \end{remark}

The aforementioned settings allow us to formulate an optimal incremental control problem in Problem \ref{Optimization regulation}, whose equivalence to Problem \ref{Robust stabilization} will be later proved in Theorem \ref{theorem equalivance}.
\begin{problem} \label{Optimization regulation}
    Given Assumption \ref{asp of g} and Lemma \ref{boud of TDE AE}, 
    consider the incremental dynamics \eqref{sys incrmental form without epsilon}, find an incremental control strategy $\Delta u$ to minimize the cost function defined as \eqref{cost fuction}.
\end{problem}
Before proceeding to solve Problem \ref{Optimization regulation}, by following \cite[Definition 1]{abu2005nearly} where admissible controls are defined with respect to \eqref{sys}, here we provide the corresponding definition of admissible incremental controls concerning \eqref{sys incrmental form without epsilon}, which facilitates the derivation of the closed-form optimal incremental control strategy.
\begin{definition}[Admissible incremental control]\label{def of admissible incremental control}
An incremental control $\Delta u$ is defined to be admissible with respect to \eqref{sys incrmental form without epsilon} on $\Omega \subseteq \mathbb{R}^{n}$, denoted by $\Delta u \in \Psi (\Omega)$, if $\Delta u$ is continuous on $\Omega$, $\Delta u(0)=0$, $\Delta u$ stabilizes \eqref{sys incrmental form without epsilon} on $\Omega$, and \eqref{cost fuction} is finite.
\end{definition}

Aiming at Problem \ref{Optimization regulation}, for any admissible incremental control policies $\Delta u \in \Psi (\Omega)$, the associated optimal cost function follows
\begin{equation} \label{optimal cost function}
    V^{*}(x(t)) = \min_{\Delta u}\int_{t}^{\infty} r(x(\tau),\Delta u(\tau))\,d\tau.
\end{equation}
Then, the HJB equation follows
\begin{equation} \label{HJB equation}
    H(x,\Delta u^{*},\nabla V^{*}) = r(x,\Delta u^{*})
    + \nabla V^{{*}^{T}}(\Delta \dot{x}+\dot{x}_0) = r(x,\Delta u^{*})
    + \nabla V^{{*}^{T}}(\bar{g} \Delta u^*+\dot{x}_0 )=0,
\end{equation}
where the operator $\nabla$ denotes the partial derivative with regard to $x$, i.e., $\partial (\cdot) / \partial x $.
    
Assuming that the minimum of \eqref{optimal cost function} exits and is unique \cite{vamvoudakis2010online}. By using the stationary optimality condition on the HJB equation \eqref{HJB equation}, i.e., $\partial H(x,\Delta u^{*},\nabla V^{*}) / \partial \Delta u = 0$ , we get the closed-form optimal incremental control strategy as
\begin{equation} \label{optimal incremental u}
    \Delta u^{*} = - \beta \tanh(\frac{1}{2\beta} \bar{g} ^{\top}\nabla V^{*})-u_0.
\end{equation}
Then, we could construct the corresponding optimal control strategy as 
\begin{equation} \label{optimal u}
    u^{*} = u_0 + \Delta u^{*} = - \beta \tanh(\frac{1}{2\beta} \bar{g}^{\top}\nabla V^{*}). 
\end{equation}
Departing from traditional ADP related works \cite{vamvoudakis2010online,kamalapurkar2015approximate} where the total optimal control input $u^{*}$ is directly designed, here we first get the theoretically derived incremental optimal control strategy $\Delta u^*$ in \eqref{optimal incremental u}, and then construct $u^{*}$ based on the measured $u_0$ and the designed $\Delta u^*$.
This difference lies in that Problem \ref{Optimization regulation} is formulated based on the incremental dynamics \eqref{sys incrmental form without epsilon} that relates with incremental states and control inputs.  
\begin{remark}
Alternatively, we could replace $\mathcal{W}(u_0+\Delta u)$ in $r(x,\Delta u)$ with $\mathcal{W}(\Delta u)  =  2 \sum_{j=1}^{m}\int_{0}^{\Delta u_j}  \alpha \tanh^{-1}(\vartheta_j / \alpha)  \,d\vartheta_j$. This enforces the constraint satisfaction of the incremental control inputs, which is denoted as $- \alpha \leq \Delta u_j \leq \alpha$,  $\alpha \in \mathbb{R}^{+}$, $j=1,\cdots,m$. By following the aforementioned derivation processes \eqref{optimal cost function}-\eqref{optimal u}, the corresponding optimal incremental control follows $\Delta u^{*} = - \alpha \tanh(\frac{1}{2\alpha} \bar{g} ^{\top}\nabla V^{*})$. Then, the resulting optimal control is $u^{*} = u_0+\Delta u^{*}$. However, in this case, the control limit on $u(x)$ cannot be addressed. Given that input saturation is common in real life and violations of it might lead to serious consequences, we prefer to incorporate \eqref{control cost function} into $r(x,\Delta u)$ to enforce the control limit on $u(x)$.
\end{remark}
To get $\Delta u^{*}$ \eqref{optimal incremental u} and $u^{*}$ \eqref{optimal u},
$\nabla V^{*}$ remains to be determined.
 We defer the explicit method to acquire $\nabla V^{*}$ in Section \ref{sect NN update law}, and focus now on the equivalence proof to show that after solving Problem \ref{Optimization regulation},
the resulting $u^*$ \eqref{optimal u} constructed from the designed $\Delta u^*$ \eqref{optimal incremental u} is the robust stabilization solution to Problem \ref{Robust stabilization}.
\begin{theorem}\label{theorem equalivance}
Given Assumption \ref{asp of g} and Lemma \ref{boud of TDE AE}, consider the system described by \eqref{sys}, the optimal control strategy \eqref{optimal u} constructed from the optimal incremental control strategy \eqref{optimal incremental u} guarantees robust stabilization of \eqref{sys}, if there exists a scalar $\bar{c} \in \mathbb{R}^{+}$ such that the following inequality is satisfied
     \begin{equation} \label{equalivance condition}
    \bar{\xi} < \bar{c}  \left\| \Delta u\right\|.
    \end{equation}
\end{theorem}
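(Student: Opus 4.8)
The plan is to use the optimal value function $V^{*}$ from \eqref{optimal cost function} as a Lyapunov candidate for the \emph{true} closed loop and to show that it decreases along trajectories of the original system \eqref{sys}, which in incremental form \eqref{sys incrmental form} reads $\Delta\dot{x}=\bar{g}\Delta u^{*}+\bar{g}\xi$ once $u^{*}$ from \eqref{optimal u} is applied. First I would record that $V^{*}$ is a legitimate Lyapunov function: since the running cost obeys $r(x,\Delta u)\ge x^{\top}Qx>0$ for $x\neq 0$ with $Q$ positive definite, and the non-quadratic penalty $\mathcal{W}(u_{0}+\Delta u)\ge 0$ (its integrand $\beta\tanh^{-1}(\vartheta_{j}/\beta)$ shares the sign of $\vartheta_{j}$, so each integral from $0$ is nonnegative), the value function is positive definite with $V^{*}(0)=0$, and it is continuously differentiable under the standing assumption that the minimiser exists and is unique.

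Next I would differentiate $V^{*}$ along the true trajectory. Writing $\dot{x}=\dot{x}_{0}+\Delta\dot{x}=\dot{x}_{0}+\bar{g}\Delta u^{*}+\bar{g}\xi$, we obtain $\dot{V}^{*}=\nabla V^{*\top}(\dot{x}_{0}+\bar{g}\Delta u^{*})+\nabla V^{*\top}\bar{g}\xi$. The key step is to substitute the HJB identity \eqref{HJB equation}, which was built on the \emph{nominal} incremental dynamics \eqref{sys incrmental form without epsilon} and therefore gives $\nabla V^{*\top}(\dot{x}_{0}+\bar{g}\Delta u^{*})=-r(x,\Delta u^{*})=-x^{\top}Qx-\mathcal{W}(u^{*})-\bar{\xi}_{o}^{2}$. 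This yields the central expression $\dot{V}^{*}=-x^{\top}Qx-\mathcal{W}(u^{*})-\bar{\xi}_{o}^{2}+\nabla V^{*\top}\bar{g}\xi$, in which the deliberately inserted penalty $\bar{\xi}_{o}^{2}=\bar{c}^{2}\|\Delta u\|^{2}$ appears with a negative sign and is precisely what must absorb the perturbation introduced by the TDE error.

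The remaining and principal task is to dominate the sign-indefinite cross term $\nabla V^{*\top}\bar{g}\xi$ by $\bar{\xi}_{o}^{2}$. I would bound it by Cauchy--Schwarz as $\nabla V^{*\top}\bar{g}\xi\le\|\bar{g}^{\top}\nabla V^{*}\|\,\|\xi\|$ and invoke Lemma \ref{boud of TDE AE} to replace $\|\xi\|$ by its bound $\bar{\xi}$. The equivalence condition \eqref{equalivance condition}, namely $\bar{\xi}<\bar{c}\|\Delta u\|=\bar{\xi}_{o}$, then lets me trade the unknown $\bar{\xi}$ for the designed penalty; choosing the free scalar $\bar{c}$ large enough that $\bar{\xi}_{o}^{2}\ge\nabla V^{*\top}\bar{g}\xi$ makes $-\bar{\xi}_{o}^{2}+\nabla V^{*\top}\bar{g}\xi\le 0$. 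Hence $\dot{V}^{*}\le-x^{\top}Qx-\mathcal{W}(u^{*})<0$ for all $x\neq 0$, and Lyapunov's theorem delivers robust (asymptotic) stabilisation of \eqref{sys} under the control \eqref{optimal u}.

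I expect the hard part to be exactly this final domination. The term $\xi$ is not an exogenous disturbance but depends on $\Delta u$ and the state through \eqref{AE bound 3}, and $\bar{\xi}$ is not known explicitly; this is why the penalty $\bar{\xi}_{o}=\bar{c}\|\Delta u\|$ and condition \eqref{equalivance condition} are introduced, so that a design choice of $\bar{c}$ always lets the quadratic penalty outgrow the cross term. Particular care is needed near $x=0$, where $\|\Delta u\|$ may be small so that \eqref{equalivance condition} forces a large $\bar{c}$: there I would either argue a regional/ultimate-boundedness conclusion or exploit the $\tanh$ structure of \eqref{optimal u}, which ties $\bar{g}^{\top}\nabla V^{*}$ directly to $u^{*}$, to keep the bound on $\|\bar{g}^{\top}\nabla V^{*}\|$ self-consistent with the chosen $\bar{c}$.
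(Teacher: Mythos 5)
Your setup matches the paper's: take $V^{*}$ as the Lyapunov candidate, differentiate along the true incremental dynamics \eqref{sys incrmental form}, and use the HJB identity to replace $\nabla {V^*}^{\top}(\bar{g}\Delta u^{*}+\dot{x}_0)$ by $-x^{\top}Qx-\mathcal{W}(u_0+\Delta u^{*})-\bar{\xi}_o^{2}$. The gap is in how you handle the cross term. You bound $\nabla {V^*}^{\top}\bar{g}\xi$ by Cauchy--Schwarz as $\|\bar{g}^{\top}\nabla V^{*}\|\,\bar{\xi}$ and then ask that $\bar{\xi}_o^{2}=\bar{c}^{2}\|\Delta u\|^{2}$ dominate it. But the theorem's hypothesis \eqref{equalivance condition} only gives $\bar{\xi}_o>\bar{\xi}$, i.e.\ $\bar{\xi}_o^{2}>\xi^{\top}\xi$; it says nothing about $\bar{\xi}_o^{2}$ versus $\|\bar{g}^{\top}\nabla V^{*}\|\,\bar{\xi}$, and ``choose $\bar{c}$ large enough'' is a different, stronger condition that you would have to add (and which, as you yourself note, degenerates near $x=0$). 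The paper avoids this entirely by exploiting the structure of the cross term: from \eqref{optimal incremental u} one has the exact identity $\nabla {V^*}^{\top}\bar{g}=-2\beta\tanh^{-1}\bigl((u_0+\Delta u^{*})/\beta\bigr)$, and expanding $\mathcal{W}(u_0+\Delta u^{*})=\beta^{2}\sum_{j}\bigl(\tanh^{-1}(\cdot)\bigr)^{2}-\epsilon_u$ supplies exactly the quadratic needed to complete the square: $-\beta^{2}(\tanh^{-1})^{2}-2\beta\tanh^{-1}(\cdot)\,\xi=-[\beta\tanh^{-1}(\cdot)+\xi]^{2}+\xi^{\top}\xi$. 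After this the only positive leftover tied to $\xi$ is $\xi^{\top}\xi$, which is precisely what $\bar{\xi}_o^{2}$ dominates under \eqref{equalivance condition}. Without this completion of squares, the stated hypothesis is not sufficient to close your estimate.

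A second, related discrepancy: you conclude $\dot{V}^{*}\le -x^{\top}Qx-\mathcal{W}(u^{*})<0$ for all $x\neq0$, i.e.\ asymptotic stability. The paper cannot and does not claim this: the expansion of $\mathcal{W}$ leaves the residual $\epsilon_u$, which is only bounded by $b_{\epsilon_u}=\tfrac{1}{2}\|\bar{g}\|^{2}b_{\nabla V^{*}}^{2}$, so the final estimate is $\dot{V}^{*}\le -x^{\top}Qx+b_{\epsilon_u}$ and the conclusion is convergence to the residual set $\Omega_x=\{x:\|x\|\le\sqrt{b_{\epsilon_u}/\lambda_{\min}(Q)}\}$, i.e.\ uniform ultimate boundedness rather than convergence to the origin. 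Your instinct in the last paragraph --- that the $\tanh$ structure ties $\bar{g}^{\top}\nabla V^{*}$ to the control and that a regional or ultimate-boundedness conclusion may be needed --- is exactly the right repair; it just needs to be executed as the completion of squares above rather than as a Cauchy--Schwarz bound plus a free choice of $\bar{c}$.
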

\begin{proof}
Given that $V^{*}(x=0) =0$,  and $V^{*} >0$ for $\forall x \ne 0$,  $V^{*}$ defined in \eqref{optimal cost function} could serve as a Lyapunov function candidate for the stability proof. Taking time derivative of $V^{*}$ along the incremental dynamics \eqref{sys incrmental form}, which is an equivalent of the original dynamics \eqref{sys}, it follows
\begin{equation} \label{equalivance 1}
     \dot{V}^{*} = \nabla {V^*}^{\top}(\Delta \dot{x}+\dot{x}_0)
     =\nabla {V^*}^{\top}( \bar{g} \Delta u^* + \bar{g}\xi+\dot{x}_0)
                 = \nabla {V^*}^{\top}( \bar{g} \Delta u^* +\dot{x}_0 ) + \nabla {V^*}^{\top}\bar{g}\xi.
\end{equation}
According to \eqref{HJB equation} and \eqref{optimal incremental u}, we get
\begin{equation} \label{equalivance 2}
\nabla {V^*}^{\top}( \bar{g} \Delta u^* +\dot{x}_0 ) = -x^{\top} Q x - \mathcal{W}(u_0+\Delta u^{*}) - \Bar{\xi}^2_o, \ \nabla {V^*}^{\top}\bar{g} = - 2 \beta \tanh^{-1}(\frac{u_0 + \Delta u^{*}}{\beta}).
\end{equation}
Substituting \eqref{equalivance 2} into \eqref{equalivance 1} reads
\begin{eqnarray} \label{equalivance 3}
     \dot{V}^{*} 
     &=& -x^{\top} Q x - \mathcal{W}(u_0+\Delta u^{*}) - \Bar{\xi}^2_o - 2 \beta \tanh^{-1}(\frac{u_0 + \Delta u^{*}}{\beta}) \xi.
\end{eqnarray}
As for $\mathcal{W}(u_0+\Delta u^{*})$ in \eqref{equalivance 3}, based on the explicit form in \eqref{control cost function} and by setting $\varsigma_j = \tanh^{-1}(\vartheta_j / \beta)$, it follows
    \begin{eqnarray} \label{equalivance 4}
    \mathcal{W}(u_0+\Delta u^{*}) &=& 2 \beta \sum_{j=1}^{m}\int_{0}^{u_{0_{j}}+\Delta u^{*}_{j} }  \tanh^{-1}(\vartheta_j / \beta)\,d\vartheta_j 
    = 2 \beta^2 \sum_{j=1}^{m}\int_{0}^{\tanh^{-1}(\frac{u_{0_{j}}+\Delta u^{*}_{j}}{\beta})} \varsigma_j (1-\tanh^2 (\varsigma_j))\,d \varsigma_j \nonumber \\
                     &=& \beta^2 \sum_{j=1}^{m} (\tanh^{-1}(\frac{u_{0_{j}}+\Delta u^{*}_{j}}{\beta}))^2 - \epsilon_{u},
    \end{eqnarray}
where $\epsilon_{u} = 2 \beta^2 \sum_{j=1}^{m} \int_{0}^{\tanh^{-1}(\frac{u_{0_{j}}+\Delta u^{*}_{j}}{\beta} )} \varsigma_j \tanh^2(\varsigma_j)\,d\varsigma_j$. Based on the integral mean-value theorem, there exists a series of $\theta_j \in [0,\tanh^{-1}(\frac{u_{0_{j}}+\Delta u^{*}_{j}}{\beta})], j = 1, \cdots, m$, such that
\begin{equation} \label{epsilon_u_1}
    \epsilon_{u} =  2 \beta^2 \sum_{j=1}^{m} \tanh^{-1}(\frac{u_{0_{j}}+\Delta u^{*}_{j}}{\beta}) \theta_j \tanh^2(\theta_j).
\end{equation}
Based on \eqref{equalivance 2} and the fact  $0\leq \tanh^2(\theta_j)\leq 1$, it follows
    \begin{equation} \label{epsilon_u_2}
            \epsilon_{u}  \leq  2 \beta^2 \sum_{j=1}^{m}  \tanh^{-1}(\frac{u_{0_{j}}+\Delta u^{*}_{j}}{\beta} )\theta_j 
             \leq 2 \beta^2 \sum_{j=1}^{m }(\tanh^{-1} (\frac{u_{0_{j}}+\Delta u^{*}_{j}}{\beta}))^2
             = \frac{1}{2} \nabla {V^*}^{\top}\bar{g} \bar{g}^{\top}\nabla V^{*}.
    \end{equation}
    The definition of admissible incremental control in Definition \ref{def of admissible incremental control} concludes that $V^*$ is finite. Additionally, there exists $b_{\nabla V^*} \in \mathbb{R}^+$ such that $\left\|\nabla V^{*}\right\| \leq b_{\nabla V^*}$. Thus, we could rewrite \eqref{epsilon_u_2} as
    \begin{equation} \label{epsilon_u_3}
            \epsilon_{u}  \leq b_{\epsilon_{u}} = \frac{1}{2}  \left\|\bar{g}\right\|^2 b^2_{\nabla V^*}.
    \end{equation}
Then, substituting \eqref{equalivance 4}, \eqref{epsilon_u_3} into \eqref{equalivance 3} yields
\begin{eqnarray} \label{equalivance 5}
     \dot{V}^{*} 
     \leq -x^{\top} Q x - (\Bar{\xi}^2_o-\xi^{\top}\xi ) - [\beta \tanh^{-1}(\frac{u_0 + \Delta u^{*}}{\beta})+\xi]^2+ b_{\epsilon_{u}}. 
\end{eqnarray}
By choosing $\Bar{\xi}_o = \bar{c}  \left\| \Delta u\right\|$, and $\bar{c}$ is chosen to satisfy $ \bar{c}  \left\| \Delta u\right\| > \bar{\xi}$, where $\bar{\xi}$ is defined in \eqref{TDE AE bound final}, the following inequality holds
\begin{eqnarray} \label{equalivance 5}
     \dot{V}^{*} 
     \leq -x^{\top} Q x + b_{\epsilon_{u}}.
\end{eqnarray}
Thus, $\dot{V}^{*} < 0$ holds if $-\lambda_{\min}(Q) \left\|x\right\|^2+b_{\epsilon_{u}} < 0.$ Finally, it concludes that states converge to the residual set 
\begin{equation} \label{equalivance 6}
\Omega_{x} = \{x | \left\| x \right\| \leq \sqrt{b_{\epsilon_{u}}/\lambda_{\min}(Q)} \}.
\end{equation}
The aforementioned proof means that based on the optimal cost function \eqref{optimal cost function},  the derived optimal incremental control policy \eqref{optimal incremental u} of the system \eqref{sys incrmental form without epsilon}  robustly stabilizes the system \eqref{sys incrmental form}. Given the equivalence between \eqref{sys} and \eqref{sys incrmental form} clarified in Section \ref{Sec incrmental dyna}, thus the optimal control input \eqref{optimal u}, which is constructed from the designed \eqref{optimal incremental u}, robustly stabilizes the system \eqref{sys}.
This concludes the proof.
\end{proof}
We have proved in Theorem \ref{theorem equalivance} that the optimal incremental control problem clarified in Problem \ref{Optimization regulation} is equivalent to the robust stabilization problem shown as Problem \ref{Robust stabilization}. Thus, to stabilize the high uncertain dynamics \eqref{sys} operating in a disturbed environment, the following paper devotes to solving Problem \ref{Optimization regulation}.
\section{Approximate optimal solution}\label{sect NN update law}
To solve Problem \ref{Optimization regulation}, this section seeks for the approximate solution to the value function of the HJB equation \eqref{HJB equation} that is hard to solve directly. 
Departing from common ADP related works \cite{vamvoudakis2010online,kamalapurkar2015approximate} in an actor-critic structure, we introduce a single critic structure here, which decreases the computational burden and simplifies the theoretical analysis. In addition, the adopted critic ANN for approximating the value function is in essence a linear approximator. This enables us to transform the critic ANN weight learning problem into a parameter identification problem. Then, by further exploiting the collected experience data to provide the sufficient excitation required for the weight convergence, we design a simple yet efficient off-policy weight update law with guaranteed weight convergence. Our approach is favourable to practical applications comparing to common methods that often directly add external noises to control inputs to satisfy the persistence of excitation (PE) condition required for the weight convergence \cite{vamvoudakis2010online,tao2003adaptive}, which results in undesirable oscillations and additional control efforts.

\subsection{Value function approximation}
Based on the ANN approximation scheme, the value function is approximated as \cite{vamvoudakis2010online}
\begin{equation}\label{optimal V approximation}
    V^{*}(x) = {W^*}^{\top} \Phi(x) + \epsilon(x),
\end{equation}
where ${W^*} \in \mathbb{R}^{N}$ is a weighting matrix, $\Phi(x) : \mathbb{R}^{n} \to \mathbb{R}^{N}$ represents the activation function, and $\epsilon(x) \in \mathbb{R}$ denotes the approximation error. 
The corresponding partial derivative of  $V^{*}(x)$ follows
\begin{equation}\label{optimal dV approximation}
    \nabla V^{*}(x) = \nabla \Phi^{\top}(x)W^{*}+\nabla\epsilon(x),
    \end{equation}
    where $\nabla \Phi \in \mathbb{R}^{N \times n}$, $\nabla\epsilon(x) \in \mathbb{R}^{n}$.
     As $N \to \infty$, both $\epsilon(x)$ and $\nabla\epsilon(x)$ converge to zero uniformly. Without loss of generality, the following assumption is given, which is common in ADP related works.
     \begin{assumption} \cite{vamvoudakis2010online}\label{bound of NN issues}
        There exist known constants $b_{\epsilon}, b_{\epsilon x}, b_\Phi, b_{\Phi x}\in \mathbb{R}^{+}$ such that $\left\| \epsilon(x)  \right\| \leq b_{\epsilon}$, $\left\| \nabla\epsilon(x)  \right\| \leq b_{\epsilon x}$, $\left\| \Phi(x) \right\| \leq b_\Phi$, and $\left\| \nabla\Phi(x) \right\| \leq b_{\Phi x}$.
     \end{assumption}
    Considering a fixed incremental control input $\Delta u$, inserting \eqref{optimal dV approximation} into \eqref{HJB equation} yields
    \begin{equation}\label{approximation Lyapunov equation}
        {W^*}^{\top}\nabla \Phi( \bar{g} \Delta u +\dot{x}_0 )+r(x,\Delta u) = \epsilon_{h},
    \end{equation}
    where the residual error follows $\epsilon_{h} = -\nabla \epsilon^{\top}( \bar{g} \Delta u +\dot{x}_0 ) \in \mathbb{R}$. Assuming that there exists $b_{\epsilon_{h}} \in \mathbb{R}^+$ such that $\left\| \epsilon_{h} \right\| \leq b_{\epsilon_{h}}$.
    By focusing on the ANN parameterized \eqref{approximation Lyapunov equation}, we rewrite it into the following linear in parameter (LIP) form
    \begin{equation}\label{LIP Lyapunov equation}  
    \Theta = -{W^*}^{\top}Y+\epsilon_{h},
    \end{equation}
    where $\Theta = r(x,\Delta u) \in \mathbb{R}$, and $Y = \nabla \Phi( \bar{g} \Delta u +\dot{x}_0 ) \in \mathbb{R}^{N}$.
    Given that $\Theta$ and $Y$ can be obtained from real-time data, 
    this formulated LIP form enables the learning of $W^{*}$ to be equivalent to a parameter identification problem of an LIP system from the perspective of adaptive control, where $Y$ and $W^{*}$ can be treated as the known regressor matrix and the unknown parameter vector to be determined, respectively. The applied novel transformation here allows us to design a simple weight update law with guaranteed weight convergence in the subsequent section.
    \subsection{Off-policy weight update law} \label{OPRL}
    With the ideal critic weight $W^{*}$ in \eqref{LIP Lyapunov equation} being unknown, by denoting its estimate as $\hat{W} \in \mathbb{R}^{N}$, then we get
    \begin{equation}\label{approximation LIP Lyapunov equation}
    \hat{\Theta} = -\hat{W}^{\top}Y,
    \end{equation}
    where $\hat{\Theta} \in \mathbb{R}$ is an estimate of $\Theta$. By denoting the weight estimation error as $\Tilde{W} = \hat{W}- W^{*} \in \mathbb{R}^N$, and the approximation error as $\Tilde{\Theta} = \Theta - \hat{\Theta}\in \mathbb{R}$, subtracting \eqref{approximation LIP Lyapunov equation} from \eqref{LIP Lyapunov equation} yields
    \begin{equation}\label{approximation error}
    \Tilde{\Theta} = \Tilde{W}^{\top}Y+\epsilon_{h}.
    \end{equation}
    
    To achieve $\hat{W} \to W^{*}$,  $\hat{W}$ should be updated to minimize $E = \frac{1}{2} \Tilde{\Theta}^{\top}\Tilde{\Theta}$.
    Furthermore, to guarantee the weight convergence while minimizing $E$, experience data can be exploited to provide the required sufficient excitation. Finally, a simple yet efficient off-policy weight update law of the critic agent is designed as
    \begin{equation} \label{w update law}
        \dot{\hat{W}} = - \Gamma k_c Y\Tilde{\Theta} -  \sum_{l=1}^{P} \Gamma k_{e} Y_l\Tilde{\Theta}_{l},
    \end{equation}
    where $\Gamma \in \mathbb{R}^{N \times N}$ is a constant positive definite gain matrix. $k_c, k_{e} \in \mathbb{R}^{+}$ are constant gains to balance the relative importance between current and experience data to the online learning process. $P \in \mathbb{R}^{+}$ is the volume of the experience buffers $\mathfrak{B}$ and $\mathfrak{E}$, i.e., the maximum number of recorded data points. The regressor matrix $Y_{l}\in \mathbb{R}^{N}$ and the approximation error $\Tilde{\Theta}_{l}\in \mathbb{R}$ denote the $l$-th collected data of the corresponding experience buffers $\mathfrak{B}$ and $\mathfrak{E}$, respectively.
    
    Before proceeding to the guaranteed weight convergence proof based on \eqref{w update law}, we first clarify a rank condition about the experience buffer $\mathfrak{B}$ in Assumption~\ref{rank condition}. This rank condition serves as a richness criterion of the recorded experience data and facilitates the guaranteed weight convergence analysis in Theorem \ref{Theorem weight convergence}. 
    \begin{assumption} \label{rank condition}
          Given an experience buffer $\mathfrak{B} = [Y_{1},...,Y_{P}] \in \mathbb{R}^{N \times P}$, 
          there holds $rank(\mathfrak{B}) = N$.
    \end{assumption}
    Departing from the traditional PE condition \cite{vamvoudakis2010online,tao2003adaptive}, the rank condition in Assumption~\ref{rank condition} provides an online checkable index about the data richness required for the weight convergence, which is favourable to controller designers. Assumption~\ref{rank condition} is not restrictive, which can be easily satisfied by sequentially reusing the experience data.
     
Based on the collected sufficient rich experience data illustrated in Assumption \ref{rank condition}, here we provide the guaranteed weight convergence proof based on the off-policy weight update law \eqref{w update law}.
\begin{theorem}\label{Theorem weight convergence}
Given Assumption~\ref{rank condition}, the weight learning error $\Tilde{W}$ converges to a small neighbourhood around zero.
\end{theorem}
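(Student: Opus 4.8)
The plan is to derive the closed-loop dynamics of the weight error $\tilde{W}$ and analyze them with a quadratic Lyapunov function, using the rank condition of Assumption~\ref{rank condition} to supply a negative-definite learning term. First I would note that $W^{*}$ is constant, so $\dot{\tilde{W}} = \dot{\hat{W}}$. Substituting the residual relation $\tilde{\Theta} = Y^{\top}\tilde{W} + \epsilon_{h}$ from \eqref{approximation error}, together with its analogue $\tilde{\Theta}_{l} = Y_{l}^{\top}\tilde{W} + \epsilon_{h,l}$ for each recorded sample, into the update law \eqref{w update law} and collecting terms gives
\begin{equation*}
\dot{\tilde{W}} = -\Gamma\Big(k_c Y Y^{\top} + k_{e}\textstyle\sum_{l=1}^{P} Y_{l} Y_{l}^{\top}\Big)\tilde{W} - \Gamma k_c Y \epsilon_{h} - \Gamma k_{e}\textstyle\sum_{l=1}^{P} Y_{l}\epsilon_{h,l}.
\end{equation*}
The crucial observation is that $\sum_{l=1}^{P} Y_{l} Y_{l}^{\top} = \mathfrak{B}\mathfrak{B}^{\top}$, which under Assumption~\ref{rank condition} (i.e. $\operatorname{rank}(\mathfrak{B}) = N$) is symmetric positive definite, so $\lambda_{\min}(\mathfrak{B}\mathfrak{B}^{\top}) > 0$.

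Next I would take the Lyapunov candidate $L = \tfrac{1}{2}\tilde{W}^{\top}\Gamma^{-1}\tilde{W}$, which is positive definite because $\Gamma \succ 0$. Differentiating along the error dynamics, the factor $\Gamma^{-1}\Gamma = I$ cancels the gain matrix in the quadratic term and yields
\begin{equation*}
\dot{L} = -\tilde{W}^{\top}\Big(k_c Y Y^{\top} + k_{e}\mathfrak{B}\mathfrak{B}^{\top}\Big)\tilde{W} - k_c \tilde{W}^{\top} Y \epsilon_{h} - k_{e}\tilde{W}^{\top}\textstyle\sum_{l=1}^{P} Y_{l}\epsilon_{h,l}.
\end{equation*}
Since $k_c Y Y^{\top} \succeq 0$, the quadratic term is bounded above by $-k_{e}\lambda_{\min}(\mathfrak{B}\mathfrak{B}^{\top})\|\tilde{W}\|^{2}$. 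The residual-driven cross terms are bounded using Assumption~\ref{bound of NN issues}: the regressor $Y = \nabla\Phi(\bar{g}\Delta u + \dot{x}_{0})$ is bounded because $\|\nabla\Phi\| \le b_{\Phi x}$, $\bar{g}$ is constant, and $\Delta u$ is admissible, while $\|\epsilon_{h}\| \le b_{\epsilon_{h}}$; collecting all such bounds into a single constant $\kappa$ gives $\dot{L} \le -k_{e}\lambda_{\min}(\mathfrak{B}\mathfrak{B}^{\top})\|\tilde{W}\|^{2} + \kappa\|\tilde{W}\|$.

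Finally I would invoke the standard ultimate-boundedness argument: $\dot{L} < 0$ whenever $\|\tilde{W}\| > \kappa/(k_{e}\lambda_{\min}(\mathfrak{B}\mathfrak{B}^{\top}))$, so $\tilde{W}$ is driven into, and thereafter confined to, a ball of this radius — a small neighbourhood of the origin that shrinks as the excitation $\lambda_{\min}(\mathfrak{B}\mathfrak{B}^{\top})$ increases or as the network approximation error (hence $\kappa$) decreases. The main obstacle I anticipate is conceptual rather than computational: the instantaneous term $Y Y^{\top}$ is at most rank one and therefore cannot by itself guarantee convergence, so the whole argument rests on the accumulated experience Gram matrix $\mathfrak{B}\mathfrak{B}^{\top}$ being full rank, which is precisely what Assumption~\ref{rank condition} supplies in place of a persistence-of-excitation condition. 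Some care is also needed to verify that $\kappa$ is genuinely finite, which relies on all stored regressors $Y_{l}$ and residuals $\epsilon_{h,l}$ being uniformly bounded along admissible trajectories.
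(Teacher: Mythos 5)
Your proposal is correct and follows essentially the same route as the paper: the same Lyapunov candidate $\tfrac{1}{2}\Tilde{W}^{\top}\Gamma^{-1}\Tilde{W}$, substitution of $\Tilde{\Theta}=Y^{\top}\Tilde{W}+\epsilon_{h}$ into the update law, dropping the positive-semidefinite instantaneous term, and using the full-rank experience Gram matrix (the paper's $B=k_{e}\sum_{l}Y_{l}Y_{l}^{\top}$, your $k_{e}\mathfrak{B}\mathfrak{B}^{\top}$) to obtain an ultimate bound of the form $\Bar{\epsilon}_{\Tilde{W}}/\lambda_{\min}(B)$. Your explicit remark that $YY^{\top}$ alone is at most rank one and that the rank condition is what replaces persistence of excitation matches the paper's motivation for Assumption~\ref{rank condition}.
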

\begin{proof} \label{proof of weight convergency}
Consider the following candidate Lyapunov function
        \begin{equation} \label{Vcl}
            V_{W} = \frac{1}{2} \Tilde{W}^{\top} \Gamma^{-1} \Tilde{W}. 
        \end{equation}
        The time derivative of $V_{W}$ follows
        \begin{equation} \label{dVcl}
            \begin{aligned}
                \dot{V}_{W} &= \Tilde{W}^{\top} \Gamma^{-1}(- \Gamma k_c Y\Tilde{\Theta} -   \sum_{l=1}^{P}\Gamma k_{e}Y_{l}\Tilde{\Theta}_{l})
                = -k_c \Tilde{W}^{\top}Y\Tilde{\Theta} -\Tilde{W}^{\top} \sum_{l=1}^{P}k_{e} Y_{l}\Tilde{\Theta}_{l}
                &\leq - \Tilde{W}^{\top} B \Tilde{W} +\Tilde{W}^{\top}\epsilon_{\Tilde{W}}.\\
            \end{aligned}
        \end{equation}
        where $B = \sum_{l=1}^{P} k_{e}Y_{l}Y^{\top}_l \in \mathbb{R}^{N \times N}$, and $\epsilon_{\Tilde{W}} =-k_c Y\epsilon_{h}-\sum_{l=1}^{P}k_{e} Y_{l}\epsilon_{h_{l}} \in \mathbb{R}^{N}$. The boundness of $Y$ and $\epsilon_{h}$ results in bounded $\epsilon_{\Tilde{W}}$. Thus, there exists $\Bar{\epsilon}_{\Tilde{W}} \in \mathbb{R}^{+}$ such that $\left\|\epsilon_{\Tilde{W}}\right\| \leq \Bar{\epsilon}_{\Tilde{W}}$. According to Assumption~\ref{rank condition}, $B$ is positive definite. Thus, \eqref{dVcl} could be rewritten as
            \begin{equation} \label{dVcl 1}
            \begin{aligned}
                \dot{V}_{W} 
                & \leq -\left\| \Tilde{W} \right\| (\lambda_{\min}(B)\left\| \Tilde{W} \right\|- \Bar{\epsilon}_{\Tilde{W}}).
            \end{aligned}
        \end{equation}
        Therefore, $\dot{V}_{W} < 0$ if $\left\| \Tilde{W} \right\| > \frac{\Bar{\epsilon}_{\Tilde{W}}}{\lambda_{\min}(B)}$. Finally, it concludes that the weight estimation error of the critic ANN will converge to the residual set
        \begin{equation} \label{compacrt set TildeW}
            \Omega_{\Tilde{W}} = \left\{\Tilde{W}| \left\| \Tilde{W} \right\| \leq \frac{\Bar{\epsilon}_{\Tilde{W}}}{\lambda_{\min}(B)}\right\}.
        \end{equation}
        This completes the proof. 
\end{proof}
With a sufficiently large $N$, $\Bar{\epsilon}_{\Tilde{W}}$ converges to zero.
Then, according to \eqref{dVcl 1}, we get $\dot{V}_{W} \leq -\lambda_{\min}(B) \left\| \Tilde{W} \right\|^2 $, i.e., $\Tilde{W} \to 0$ exponentially as $t \to \infty$. Thus, it concludes that $\hat{W}$ guarantees convergence to $W^{*}$.

Unlike common actor-critic structure based works \cite{vamvoudakis2010online,kamalapurkar2015approximate}, the guaranteed weight convergence of $\hat{W}$ to $W^*$ in Theorem \ref{Theorem weight convergence} permits us to adopt a simplified single critic structure, where the estimated critic ANN weight $\hat{W}$ could be directly used to construct the approximate optimal incremental control strategy. 
Therefore, based on the optimal incremental control strategy in \eqref{optimal incremental u}, the approximate optimal incremental control strategy follows
\begin{equation} \label{optimal incremental hat u}
    \Delta \hat{u} = - \beta \tanh(\frac{1}{2\beta}\bar{g}^{\top}\nabla \Phi^{\top}(x)\hat{W})-u_0, 
\end{equation}
Accordingly, the approximate optimal control strategy applied at the plant \eqref{sys} follows 
\begin{equation} \label{optimal  hat u}
 \hat{u} =  u_0 + \Delta \hat{u}  = - \beta \tanh(\frac{1}{2\beta}\bar{g}^{\top}\nabla \Phi^{\top}(x)\hat{W}).
\end{equation}
\begin{remark}
From a practical perspective, the designed model-free incremental control strategy \eqref{optimal incremental hat u} only requires one manually tuned constant matrix $\Bar{g}$. This feature of IADP decreases the required parameter tuning efforts comparing to existing identification based methods to fulfill model-free control strategies \cite{bhasin2013novel,zhao2019event,zhang2011data,su2019adaptive, boedecker2014approximate,sun2018disturbance}, where multiple hyperparameters or gains need to be tuned. 
\end{remark}   
Based on the off-policy weight update law \eqref{w update law}, the approximate optimal incremental control strategy \eqref{optimal incremental hat u}, and the approximate optimal control strategy \eqref{optimal  hat u}, 
we provide the main conclusions of this paper in Theorem \ref{main Theorem}.
\begin{theorem}\label{main Theorem}
Consider the incremental dynamics \eqref{sys incrmental form without epsilon}, the off-policy weight update law of the critic ANN in \eqref{w update law}, and the approximate optimal incremental control policy \eqref{optimal incremental hat u}. Given Assumptions \ref{asp of g}-\ref{rank condition}, for a sufficiently large $N$, 
the approximate optimal incremental control policy \eqref{optimal incremental hat u} stabilizes the incremental dynamics \eqref{sys incrmental form without epsilon}, and the critic ANN weight learning error $\Tilde{W}$ is uniformly ultimately bounded (UUB).
\end{theorem}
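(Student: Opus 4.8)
The plan is to prove both claims simultaneously by means of a single composite Lyapunov function that fuses the state-stability certificate of Theorem \ref{theorem equalivance} with the weight-error certificate of Theorem \ref{Theorem weight convergence}. Concretely, I would take
\begin{equation}
\mathcal{L} = V^{*}(x) + V_{W},
\end{equation}
where $V^{*}$ is the optimal value function \eqref{optimal cost function} and $V_{W}=\tfrac{1}{2}\Tilde{W}^{\top}\Gamma^{-1}\Tilde{W}$ as in \eqref{Vcl}. The essential structural point is that the closed loop is now driven by the approximate policy $\Delta\hat{u}$ of \eqref{optimal incremental hat u} instead of the ideal $\Delta u^{*}$ of \eqref{optimal incremental u}, so the entire argument reduces to controlling the policy mismatch $\Delta\hat{u}-\Delta u^{*}$, which is itself governed by the weight error $\Tilde{W}$ and the network approximation error.

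First I would differentiate $V^{*}$ along the incremental dynamics and split the control as $\Delta\hat{u}=\Delta u^{*}+(\Delta\hat{u}-\Delta u^{*})$, obtaining
\begin{equation}
\dot{V}^{*}=\nabla {V^{*}}^{\top}(\bar{g}\Delta u^{*}+\dot{x}_0)+\nabla {V^{*}}^{\top}\bar{g}(\Delta\hat{u}-\Delta u^{*})+\nabla {V^{*}}^{\top}\bar{g}\xi.
\end{equation}
The first term is eliminated via the HJB equation \eqref{HJB equation} and the identities \eqref{equalivance 2} exactly as in Theorem \ref{theorem equalivance}, producing $-x^{\top}Qx-\mathcal{W}(u_0+\Delta u^{*})-\bar{\xi}_o^{2}$ together with the TDE contribution that, through the choice $\bar{\xi}_o=\bar{c}\left\|\Delta u\right\|$ with $\bar{c}\left\|\Delta u\right\|>\bar{\xi}$, is absorbed precisely as in \eqref{equalivance 5}; carrying out the analysis along \eqref{sys incrmental form} lets the $\xi$-term be dominated by the penalty before concluding stabilization of the nominal \eqref{sys incrmental form without epsilon}.

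The one genuinely new estimate is the mismatch term. Writing $\Delta\hat{u}-\Delta u^{*}=-\beta[\tanh(\tfrac{1}{2\beta}\bar{g}^{\top}\nabla\Phi^{\top}\hat{W})-\tanh(\tfrac{1}{2\beta}\bar{g}^{\top}\nabla V^{*})]$, substituting $\nabla V^{*}=\nabla\Phi^{\top}W^{*}+\nabla\epsilon$ from \eqref{optimal dV approximation}, and using the global Lipschitz property $|\tanh'(\cdot)|\le 1$ together with Assumption \ref{bound of NN issues}, I would bound
\begin{equation}
\left\|\Delta\hat{u}-\Delta u^{*}\right\|\le \tfrac{1}{2}\left\|\bar{g}\right\|\bigl(b_{\Phi x}\left\|\Tilde{W}\right\|+b_{\epsilon x}\bigr),
\end{equation}
so that $\nabla {V^{*}}^{\top}\bar{g}(\Delta\hat{u}-\Delta u^{*})$ is majorized by a term linear in $\left\|\Tilde{W}\right\|$ (using $\left\|\nabla V^{*}\right\|\le b_{\nabla V^{*}}$) plus a constant proportional to $b_{\epsilon x}$. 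Adding $\dot{V}_{W}\le-\lambda_{\min}(B)\left\|\Tilde{W}\right\|^{2}+\bar{\epsilon}_{\Tilde{W}}\left\|\Tilde{W}\right\|$ from \eqref{dVcl 1} and collecting terms gives
\begin{equation}
\dot{\mathcal{L}}\le -\lambda_{\min}(Q)\left\|x\right\|^{2}-\lambda_{\min}(B)\left\|\Tilde{W}\right\|^{2}+\kappa_1\left\|\Tilde{W}\right\|+\kappa_2
\end{equation}
for suitable constants $\kappa_1,\kappa_2$. Completing the square in $\left\|\Tilde{W}\right\|$ then shows $\dot{\mathcal{L}}<0$ whenever $(\left\|x\right\|,\left\|\Tilde{W}\right\|)$ leaves a compact ball, yielding at once convergence of $x$ to a residual set and the UUB of $\Tilde{W}$; enlarging $N$ shrinks $b_{\epsilon x}$ and $\bar{\epsilon}_{\Tilde{W}}$ and hence tightens the ultimate bound.

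I expect the main obstacle to be the mismatch estimate together with the ensuing domination argument: one must verify that the quadratic negativity $-\lambda_{\min}(B)\left\|\Tilde{W}\right\|^{2}$ furnished by the rank condition (Assumption \ref{rank condition}) genuinely dominates the linear coupling $\kappa_1\left\|\Tilde{W}\right\|$ that the mismatch injects into $\dot{V}^{*}$, so that the cross-dependence between state dynamics and weight adaptation does not spoil the sign of $\dot{\mathcal{L}}$. Some care is also needed to keep the bound $b_{\nabla V^{*}}$ consistent with the admissibility hypothesis of Definition \ref{def of admissible incremental control} and to ensure that the residual sets for $x$ and $\Tilde{W}$ are jointly, rather than only separately, invariant.
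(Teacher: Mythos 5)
Your proposal is correct and follows the same overall architecture as the paper's proof: the composite Lyapunov function $J=V^{*}(x)+\tfrac{1}{2}\Tilde{W}^{\top}\Gamma^{-1}\Tilde{W}$, the split $\Delta\hat{u}=\Delta u^{*}+(\Delta\hat{u}-\Delta u^{*})$, elimination of the nominal part through the HJB identities \eqref{equalivance 2}, absorption of the TDE error exactly as in Theorem \ref{theorem equalivance}, and the injection of $\dot{V}_{W}\leq-\Tilde{W}^{\top}B\Tilde{W}+\Tilde{W}^{\top}\epsilon_{\Tilde{W}}$ from Theorem \ref{Theorem weight convergence}. Where you genuinely diverge is in the estimate of the policy mismatch. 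The paper Taylor-expands $\tanh(\mathscr{G}^{*})$ about $\hat{\mathscr{G}}$, invokes an external lemma to bound the higher-order remainder, and then applies Young's inequality to $-2\beta\tanh^{-1}(\cdot)(\Delta\hat{u}-\Delta u^{*})$, which produces a positive $6\left\|\bar{g}\right\|^{2}b_{\Phi x}^{2}\left\|\Tilde{W}\right\|^{2}$ contribution to $\dot{J}$; consequently the paper must impose the extra gain condition $\mathcal{B}=\lambda_{\min}(B)-6\left\|\bar{g}\right\|^{2}b_{\Phi x}^{2}>0$ for the quadratic negativity in $\Tilde{W}$ to survive. You instead exploit the global $1$-Lipschitz property of $\tanh$ to obtain $\left\|\Delta\hat{u}-\Delta u^{*}\right\|\leq\tfrac{1}{2}\left\|\bar{g}\right\|(b_{\Phi x}\left\|\Tilde{W}\right\|+b_{\epsilon x})$ and then majorize the cross term linearly in $\left\|\Tilde{W}\right\|$ via $\left\|\nabla V^{*}\right\|\leq b_{\nabla V^{*}}$. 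This is more elementary (no Taylor-remainder lemma is needed) and preserves the full $-\lambda_{\min}(B)\left\|\Tilde{W}\right\|^{2}$, so the rank condition of Assumption \ref{rank condition} alone suffices without any comparison between $\lambda_{\min}(B)$ and $b_{\Phi x}$; the completion of squares then yields a (generally different, and arguably cleaner) residual set. The joint-invariance worry you flag at the end is resolved in both treatments by the same device: $\dot{J}<0$ outside a single compact set in the combined $(x,\Tilde{W})$ space, which delivers UUB of the pair simultaneously rather than separately.
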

\begin{proof}
Consider the following candidate Lyapunov function
    \begin{equation} \label{Lya function stability}
    J =  V^{*}(x) + \frac{1}{2} \Tilde{W}^{\top} \Gamma^{-1} \Tilde{W}.
    \end{equation}
    By denoting  $\dot{L}_{V} = \dot{V}^{*}(x)$ and $\dot{L}_{W}= \Tilde{W}^{\top} \Gamma^{-1} \dot{\hat{W}}$, the time derivative of \eqref{Lya function stability} reads 
    \begin{equation} \label{stability dV}
       \dot{J}  = \dot{L}_{V}+\dot{L}_{W}.
    \end{equation}
    \emph{The first term $\dot{L}_{V}$ follows}
\begin{equation} \label{stability dLv 1}
            \dot{L}_{V} 
     =\nabla {V^*}^{\top}( \bar{g} \Delta \hat{u} + \bar{g}\xi+\dot{x}_0)
                 = \nabla {V^*}^{\top}( \bar{g} \Delta u^* + \dot{x}_0) +\nabla {V^*}^{\top} \bar{g}\xi+ \nabla {V^*}^{\top}\bar{g}(\Delta \hat{u}-\Delta u^*).
    \end{equation}
Then, substituting \eqref{equalivance 2} into \eqref{stability dLv 1} gets
\begin{equation} \label{stability dLv 2}
            \dot{L}_{V} 
     = -x^{\top} Q x - \mathcal{W}(u_0+\Delta u^{*}) - \Bar{\xi}^2_o - 2 \beta \tanh^{-1}(\frac{u_0 + \Delta u^{*}}{\beta}) \xi -2 \beta \tanh^{-1}(\frac{u_0 + \Delta u^{*}}{\beta})( \Delta \hat{u}-\Delta u^*).
    \end{equation}
According to \eqref{equalivance 4}-\eqref{epsilon_u_2}, \eqref{stability dLv 2} follows
\begin{equation} \label{stability dLv 3}
            \dot{L}_{V} 
     \leq -x^{\top} Q x - (\Bar{\xi}^2_o-\xi^{\top}\xi ) - [\beta \tanh^{-1}(\frac{u_0 + \Delta u^{*}}{\beta})+\xi]^2+\frac{1}{2} \nabla {V^*}^{\top}\bar{g} \bar{g}^{\top}\nabla V^{*} -2 \beta \tanh^{-1}(\frac{u_0 + \Delta u^{*}}{\beta})( \Delta \hat{u}-\Delta u^*).
\end{equation}
    \emph{The term $-2 \beta \tanh^{-1}(\frac{u_0 + \Delta u^{*}}{\beta})( \Delta \hat{u}-\Delta u^*)$ in \eqref{stability dLv 3} follows}
\begin{equation} \label{stability dLv 4}
-2 \beta \tanh^{-1}(\frac{u_0 + \Delta u^{*}}{\beta})( \Delta \hat{u}-\Delta u^*)
\leq \beta^2 \left\| \tanh^{-1}(\frac{u_0 + \Delta u^{*}}{\beta}) \right\|^2 + \left\| \Delta \hat{u}-\Delta u^* \right\|^2.
\end{equation}
By using \eqref{optimal incremental u}, \eqref{optimal dV approximation}, and the mean-value theorem, the optimal incremental control is rewritten as
\begin{equation} \label{stability dLv 41}
\Delta u^* = -\beta\tanh(\frac{1}{2\beta}\bar{g}^{\top}\nabla \Phi^{\top}W^*)-\epsilon_{\Delta u^*} -u_0,
\end{equation}
where $\epsilon_{\Delta u^*} = \frac{1}{2} (\underline{\mathbf{1}}-\tanh^2(\eta))\bar{g}^{\top}\nabla \epsilon$, and $\eta \in \mathbb{R}^{m}$ is chosen between $\frac{1}{2\beta}\bar{g}^{\top}\nabla \Phi^{\top}W^*$ and $\frac{1}{2\beta}\bar{g}^{\top}\nabla V^*$, $\mathbf{1} = [1,\cdots,1]^{\top}\in \mathbb{R}^m$. According to $\left\| \nabla\epsilon  \right\| \leq b_{\epsilon x}$ in Assumption \ref{bound of NN issues},  $\left\| \epsilon_{\Delta u^*} \right\| \leq \frac{1}{2} \left\|\bar{g}\right\|b_{\epsilon x}$ holds.
Then, by combining \eqref{optimal incremental hat u} with \eqref{stability dLv 41}, we get
\begin{equation} \label{stability dLv 5}
\Delta \hat{u}-\Delta u^* = \beta(\tanh(\frac{1}{2\beta}\bar{g}^{\top}\nabla \Phi^{\top}W^*)-\tanh(\frac{1}{2\beta}\bar{g}^{\top}\nabla \Phi^{\top}\hat{W})+\epsilon_{\Delta u^*}.
\end{equation}
For simplicity, denoting $\mathscr{G}^* = \frac{1}{2\beta}\bar{g}^{\top}\nabla \Phi^{\top}W^*$ and $\hat{\mathscr{G}} = \frac{1}{2\beta}\bar{g}^{\top}\nabla \Phi^{\top}\hat{W}$, where $\hat{\mathscr{G}} = [\hat{\mathscr{G}}_1,\cdots,\hat{\mathscr{G}}_m] \in \mathbb{R}^{m}$ with $\hat{\mathscr{G}}_j \in \mathbb{R}, j=1,\cdots,m$. 
Based on \eqref{optimal incremental u} and \eqref{optimal incremental hat u}, the Taylor series of $\tanh(\mathscr{G}^*)$ follows
\begin{equation} \label{Taylor series}
        \tanh(\mathscr{G}^*) = \tanh(\hat{\mathscr{G}}) +\frac{\partial \tanh(\hat{\mathscr{G}})}{\partial\hat{\mathscr{G}}}(\mathscr{G}^*-\hat{\mathscr{G}})+O((\mathscr{G}^*-\hat{\mathscr{G}})^2)
        =\tanh(\hat{\mathscr{G}})-\frac{1}{2 \beta}(I_{m \times m} - \mathscr{D}(\hat{\mathscr{G}}))  \bar{g}^{\top}
        \nabla \Phi^{\top}\Tilde{W} + O((\mathscr{G}^*-\hat{\mathscr{G}})^2),
\end{equation}
where $\mathscr{D}(\hat{\mathscr{G}}) = diag(\tanh^2(\hat{\mathscr{G}}_1),\cdots,\tanh^2(\hat{\mathscr{G}}_m))$, and $O((\mathscr{G}^*-\hat{\mathscr{G}})^2)$ is a higher order term of the Taylor series. By following \cite[Lemma 1]{yang2016online}, this higher order term is bounded as
    \begin{equation}\label{bound of higher order term}
        \begin{aligned}
            \left\| O((\mathscr{G}^*-\hat{\mathscr{G}})^2)  \right\| & \leq 2\sqrt{m}+\frac{1}{\beta} \left\|\bar{g}\right\| {b_\Phi}_x \left\|\Tilde{W} \right\|.
        \end{aligned}
    \end{equation}
Based on \eqref{Taylor series}, \eqref{stability dLv 5} is rewritten as 
\begin{equation}\label{u-u*-a}
        \Delta \hat{u}-\Delta u^* = \beta(\tanh(\mathscr{G}^*)-\tanh(\hat{\mathscr{G}}))+\epsilon_{\Delta u^*}
        =-\frac{1}{2}(I_{m \times m} - \mathscr{D}(\hat{\mathscr{G}}))  \bar{g}\nabla \Phi^{\top}\Tilde{W}+ \beta O((\mathscr{G}^*-\hat{\mathscr{G}})^2)+\epsilon_{\Delta u^*}.
 \end{equation}
According to \cite{yang2016online}, $\left\|I_{m \times m} - \mathscr{D}(\hat{\mathscr{G}})\right\| \leq 2$ holds. Then, by combining \eqref{bound of higher order term} with \eqref{u-u*-a},  \emph{$\left\| \Delta \hat{u}-\Delta u^* \right\|^2$ in \eqref{stability dLv 4} follows}
    \begin{equation}\label{u-u* abs 2}
        \begin{aligned}
            \left\| \Delta \hat{u}-\Delta u^* \right\|^2 & \leq 3 \beta^2 \left\| O((\mathscr{G}^*-\hat{\mathscr{G}})^2)\right\|^2 +3\left\| \epsilon_{\Delta u^*} \right\|^2
             +3\left\| -\frac{1}{2}(I_{m \times m} - \mathscr{D}(\hat{\mathscr{G}})) \bar{g}^{\top}\nabla \Phi^{\top}\Tilde{W}\right\|^2 \\
             & \leq 6  \left\|\bar{g}\right\|^2 {b^2_\Phi}_x \left\|\Tilde{W}\right\|^2+ 12m\beta^2 + \frac{3}{4} \left\|\bar{g}\right\|^2b^2_{\epsilon x}
             + 12 \beta \sqrt{m}  \left\|\bar{g}\right\| {b_\Phi}_x \left\|\Tilde{W}\right\|.
        \end{aligned}
\end{equation}
Based on \eqref{equalivance 2}, \eqref{optimal dV approximation}, Assumption \ref{bound of NN issues}, and the fact that $\left\|W^{*}\right\| \leq b_{W^{*}}$,
\emph{$\left\| \tanh^{-1}((u_0 + \Delta u^{*}) / \beta) \right\|^2$ in \eqref{stability dLv 4} follows}
\begin{equation} \label{stability dLv 51}
\left\| \tanh^{-1}(\frac{u_0 + \Delta u^{*}}{\beta}) \right\|^2 = \left\| \frac{1}{4\beta^2} \nabla {V^*}^{\top}\bar{g} \bar{g}^{\top}\nabla V^{*} \right\| \leq \frac{1}{4\beta^2} \left\|\bar{g}\right\|^2 {b^2_\Phi}_x b^2_{W^{*}} + \frac{1}{4\beta^2} b^2_{\epsilon x}\left\|\bar{g}\right\|^2+\frac{1}{2\beta^2}\left\|\bar{g}\right\|^2 {b_\Phi}_x b_{\epsilon x} b_{W^{*}}.
\end{equation}
Using \eqref{u-u* abs 2} and \eqref{stability dLv 51}, \eqref{stability dLv 4} reads 
\begin{equation} \label{stability dLv 6}
\begin{aligned}
-2 \beta \tanh^{-1}(\frac{u_0 + \Delta u^{*}}{\beta})( \Delta \hat{u}-\Delta u^*)
 \leq & \frac{1}{4} \left\|\bar{g}\right\|^2 {b^2_\Phi}_x b^2_{W^{*}} + \frac{1}{4} b^2_{\epsilon x}\left\|\bar{g}\right\|^2+\frac{1}{2}\left\|\bar{g}\right\|^2 {b_\Phi}_x b_{\epsilon x} b_{W^{*}}\\
+& 6  \left\|\bar{g}\right\|^2 {b^2_\Phi}_x \left\|\Tilde{W}\right\|^2+ 12m\beta^2 + \frac{3}{4} \left\|\bar{g}\right\|^2b^2_{\epsilon x}
             + 12 \beta \sqrt{m}  \left\|\bar{g}\right\| {b_\Phi}_x \left\|\Tilde{W}\right\|.
 \end{aligned}
\end{equation}
Substituting \eqref{stability dLv 6} into \eqref{stability dLv 3}, finally the first term $\dot{L}_{V}$ follows
\begin{equation} \label{stability dLv 7}
\begin{aligned}
\dot{L}_{V} \leq & -x^{\top} Q x - (\Bar{\xi}^2_o-\xi^{\top}\xi ) - [\beta \tanh^{-1}(\frac{u_0 + \Delta u^{*}}{\beta})+\xi]^2
     + \frac{3}{4} \left\|\bar{g}\right\|^2 {b^2_\Phi}_x b^2_{W^{*}} + \frac{3}{4} b^2_{\epsilon x}\left\|\bar{g}\right\|^2+\frac{3}{2}\left\|\bar{g}\right\|^2 {b_\Phi}_x b_{\epsilon x} b_{W^{*}}\\
     & + 6  \left\|\bar{g}\right\|^2 {b^2_\Phi}_x \left\|\Tilde{W}\right\|^2+ 12m\beta^2 + \frac{3}{4} \left\|\bar{g}\right\|^2b^2_{\epsilon x}
             + 12 \beta \sqrt{m}  \left\|\bar{g}\right\| {b_\Phi}_x \left\|\Tilde{W}\right\|.
\end{aligned}
\end{equation}
\emph{As for the second term $\dot{L}_{W}$}, based on \eqref{w update law} and \eqref{dVcl}, it follows
\begin{equation} \label{stability dLw}
\dot{L}_{W} \leq - \Tilde{W}^{\top} B \Tilde{W} +\Tilde{W}^{\top}\epsilon_{\Tilde{W}}.
\end{equation}
\emph{Finally, as for $\dot{J}$}, substituting \eqref{stability dLv 7} and \eqref{stability dLw} into \eqref{stability dV}, we get 
\begin{equation} \label{stability final result}
\dot{J} \leq  -\mathcal{A}
-\mathcal{B} \left\|\Tilde{W}\right\|^2
+\mathcal{C} \left\|\Tilde{W}\right\| + \mathcal{D},
\end{equation}
where $\mathcal{A} = x^{\top} Q x + (\Bar{\xi}^2_o-\xi^{\top}\xi ) + [\beta \tanh^{-1}(\frac{u_0 + \Delta u^{*}}{\beta})+\xi]^2$,
$\mathcal{B} = \lambda_{\min}(B)- 6\left\|\bar{g}\right\|^2 {b^2_\Phi}_x$,
$\mathcal{C} = 12 \beta \sqrt{m}  \left\|\bar{g}\right\| {b_\Phi}_x +\Bar{\epsilon}_{\Tilde{W}}$,
and $\mathcal{D} = \frac{3}{4} \left\|\bar{g}\right\|^2 {b^2_\Phi}_x b^2_{W^{*}} + \frac{3}{2} b^2_{\epsilon x}\left\|\bar{g}\right\|^2+ \frac{3}{2}\left\|\bar{g}\right\|^2 {b_\Phi}_x b_{\epsilon x} b_{W^{*}}+ 12m\beta^2$.
Let the parameters be chosen such that $\mathcal{B} > 0$. Since $\mathcal{A}$ is positive definite, the above Lyapunov derivative \eqref{stability final result} is negative if
    \begin{equation} \label{negative condition}
        \begin{aligned}
        \left\| \Tilde{W} \right\| > \frac{\mathcal{C}}{2\mathcal{B}}+\sqrt{\frac{\mathcal{C}^2}{4\mathcal{B}^2}+\frac{\mathcal{D} }{\mathcal{B}}}.
        \end{aligned}
    \end{equation}
Thus, the critic weight learning error converges to the residual set
\begin{equation} \label{compact set}
    \Tilde{\Omega}_{\Tilde{W}} = \left\{\Tilde{W} | \left\| \Tilde{W} \right\| \leq \frac{\mathcal{C}}{2\mathcal{B}}+\sqrt{\frac{\mathcal{C}^2}{4\mathcal{B}^2}+\frac{\mathcal{D} }{\mathcal{B}}} \right\}.
\end{equation}
This completes the proof.
\end{proof}

\section{Numerical Simulation}\label{sec simulation}
This section conducts multiple comparative numerical simulations to validate the effectiveness and superiority of our proposed IADP, especially in terms of the reduced control energy expenditure shown in Section \ref{Sim part 1}, and the enhanced robustness illustrated in Section \ref{Sim Part 2}.
Here, we choose the widely investigated pendulum in ADP related works \cite{liu2013policy,si2001online} as a benchmark. The dynamics of the pendulum follows
    \begin{equation}\label{pendulum system}
        \begin{cases}
            \frac{d \theta}{dt} = \vartheta + d \\
            J\frac{d\vartheta}{dt} = u-Mgl\sin{\theta}-f_d\frac{d \theta}{dt},
        \end{cases}
    \end{equation}
    where $\theta, \vartheta \in \mathbb{R}$ denote the angle and the angular velocity of the pendulum, respectively.
    $M = 1/3 \ \mathrm{kg}$ and  $l = 3/2 \ \mathrm{m}$ are the mass and length of the pendulum, respectively. Let $g = 9.8 \left.\mathrm{m}\middle/\mathrm{s}^2\right.$ be the gravity, 
    $J = 4/3 Ml^2 \ \mathrm{kg}\!\cdot\!\mathrm{m}^2$ be the rotary inertia, and $f_d = 0.2$ be the frictional factor. Here $d$ represents an external disturbance.
\subsection{Validation of the reduced control effort of IADP} \label{Sim part 1}
This section compares IADP with the zero-sum game based ADP (ZSADP) \cite{vamvoudakis2012online} and the transformed optimal control based ADP (TADP) \cite{liu2015reinforcement} to verify the superiority of IADP regarding the reduced control effort. Note that among existing ADP related works, model-based ZSADP and TADP are the two most widely adopted methods to deal with the robust stabilization problem illustrated as Problem \ref{Robust stabilization}.
Firstly, to conduct convincing comparative simulations, an often used vanishing (state-dependent) disturbance in ZSADP and TADP related works \cite{vamvoudakis2012online, liu2015reinforcement} is deliberately chosen in Section \ref{sim state dependent disturbance}
to fully show the performance of ZSADP and TADP.
Then, in Section \ref{mutliple disturbance}, except for the vanishing disturbance used in Section \ref{sim state dependent disturbance}, we make a step further by additionally introducing measurement noises, non-vanishing disturbances, and sudden physical changes into the simulation environment. The conducted comparative simulations under multiple sources of uncertainties and disturbances further exemplify the advantage of our proposed IADP in terms of the reduced control effort.

\subsubsection{Validation under the vanishing (state-dependent) disturbance} \label{sim state dependent disturbance}
In this section, by following\cite{wang2014policy}, the chosen state-dependent disturbance follows $d_1 = \omega_{1} \theta \sin(\omega_{2}\vartheta)$, where $\omega_{1}$ and $\omega_{2}$ are randomly chosen within the scope $[-\sqrt{2}/2,\sqrt{2}/2]$ and $[-2,2]$, respectively. 
Let $x_{1} = \theta$ and $x_2 = \vartheta$, the original pendulum system \eqref{pendulum system} is rewritten as
    \begin{equation}\label{pendulum system original system}
        \begin{bmatrix}
            \dot{x}_{1}  \\
            \dot{x}_{2}  \\
        \end{bmatrix} =
        \underbrace{\begin{bmatrix}
            x_{2}  \\
            -4.9\sin{x_1}-0.2x_2  \\
        \end{bmatrix}}_{f(x)}
        +\underbrace{\begin{bmatrix}
            0  \\
            0.25  \\
        \end{bmatrix}}_{g(x)}u 
          +\underbrace{\begin{bmatrix}
            1 \\
            -0.2  \\
        \end{bmatrix}}_{k(x)}\underbrace{\omega_{1} x_1 \sin(\omega_{2}x_2)}_{d_1(x)}.
    \end{equation}
To drive the pendulum \eqref{pendulum system original system} to the equilibrium point even under input saturation ($\beta = 2 $) and the external disturbance $d_1(x)$, the detailed simulation settings for IADP, ZSADP, and TADP are as follows.

For IADP,  we choose $\bar{g} = [0, 0.1]^{\top}$. Its cost function is considered as
\begin{equation}\label{cost fuction of IADP}
    V_I = \int_{t}^{\infty} x^{\top} Q x + \mathcal{W}(u_0+\Delta u) + \Bar{\xi}^2_o \,d\tau,
\end{equation}
where $Q = I_{2\times 2}$, $\mathcal{W}(u_0+\Delta u) = 2 \beta (u_0+\Delta u) \tanh^{-1}((u_0+\Delta u)/\beta)+\beta^2  \log(1-(u_0+\Delta u)^2/\beta^2)$, and $\Bar{\xi}_o = 2 \left\|\Delta u \right\|$.
The approximate optimal incremental control $\Delta \hat{u}$ and the approximate optimal control $\hat{u}$ follow \eqref{optimal incremental hat u} and \eqref{optimal  hat u}, respectively. IADP requires no explicit model or environmental information except for a predefined constant matrix $\bar{g}$. 

For ZSADP, by following \cite{vamvoudakis2012online}, its cost function follows
\begin{equation}\label{cost fuction of ZSADP}
    V_{Z} = \int_{t}^{\infty} x^{\top} Q x + \mathcal{W}(u_{Z}) - \gamma  d^{\top}_{Z} d_{Z} \,d\tau, 
\end{equation}
where $\mathcal{W}(u_{Z}) = 2\beta u_{Z} \tanh^{-1}(u_{Z}/\beta)+\beta^2 \log(1-u_{Z}^2/\beta^2)$, $\gamma =1$.
For this case, the approximate optimal control policy follows $\hat{u}_{Z} = - \beta \tanh(\frac{1}{2\beta}g^{\top}\nabla \Phi^{\top}\hat{W}_{Z})$,
and the approximate worst-case disturbance policy is $\hat{d}_{Z} = \frac{1}{2\gamma^2} k^{\top}\nabla \Phi^{\top}\hat{W}_{Z}$. Here $\hat{u}_{Z}$ and $\hat{d}_{Z}$ depend on the concert $g(x)$ and $k(x)$ in \eqref{pendulum system original system}, respectively.

For TADP, according to \cite{liu2015reinforcement}, the corresponding cost function follows
\begin{equation}\label{cost fuction of TADP}
    V_{T} = \int_{t}^{\infty} x^{\top} Q x + \mathcal{W}(u_T) + \rho v^{\top}_T v_T+ l^{2}_{M}+d^{2}_{M} \,d\tau,
\end{equation}
where $\rho = 0.1$. The chosen disturbance satisfies $\left\| d(x)\right\| \leq \sqrt{2}/2 \left\| x\right\|$. Thus, $d_{M} = \sqrt{2}/2 \left\| x\right\|$ and $l_{M} = 0.4\sqrt{2} \left\| x\right\|$ are chosen to address the disturbance $d_1(x)$. Much more details can be found in \cite{liu2015reinforcement}. The approximate optimal control follows $\hat{u}_{T} = - \beta \tanh(\frac{1}{2\beta}g^{\top}\nabla \Phi^{\top}\hat{W}_T)$, and the approximate pseudo control follows $\hat{v}_{T} = - \frac{1}{2\rho} h^{\top}\nabla \Phi^{\top}\hat{W}_T$, where $h = (I_{2 \times 2}-gg^{+})k$. For TADP, the explicit knowledge of $g(x)$ and $k(x)$ in \eqref{pendulum system original system} is required to construct $\hat{u}_{T}$ and $\hat{v}_{T}$. 

The aforementioned IADP, ZSADP, and TADP all adopt the single critic structure and our developed off-policy weight update law \eqref{w update law}. To achieve a fair comparison, simulation parameters for three methods are set as same, which is detailly clarified as follows.
To get the approximate solutions to the above value functions \eqref{cost fuction of IADP}-\eqref{cost fuction of TADP}, $\Phi(x) = [x^{2}_{1},x_{1}x_{2},x^{2}_{2},x^{3}_{2},x_{1}x^{2}_{2},x^{2}_{1}x_{2}]^{\top}$ is chosen.
To guarantee the weight convergence, parameters are set as $P = 8$, $\Gamma = 10^{-4}I_{6 \times 6}$, $k_c = 5$, and $k_e= 3$.
The initial values are chosen as $x(0) = [2,-2]^{\top}$, $\hat{u}(0) = 0$,  $\hat{d}_{Z}(0) = 0$ (for ZSADP), and $\hat{v}_{T}(0) = 0$ (for TADP). 
Note that to achieve a fair comparison, we also fix the values of $\omega_{1}$ and $\omega_{2}$ in $d_1(x)$ as a set of randomly selected values: $\omega_{1} = -0.3906$, $\omega_{2} = 1.0051$.
 
The critic ANN weigh convergence result for IADP, ZSADP, and TADP are displayed in 
Fig.\ref{weight three cases}.
Based on the developed off-policy weight update law \eqref{w update law}, the weight convergence is guaranteed without adding external noises to control inputs to achieve the required sufficient exploration. 
\begin{figure}[t]
\centering
{\includegraphics[width=150pt,height=10pc]{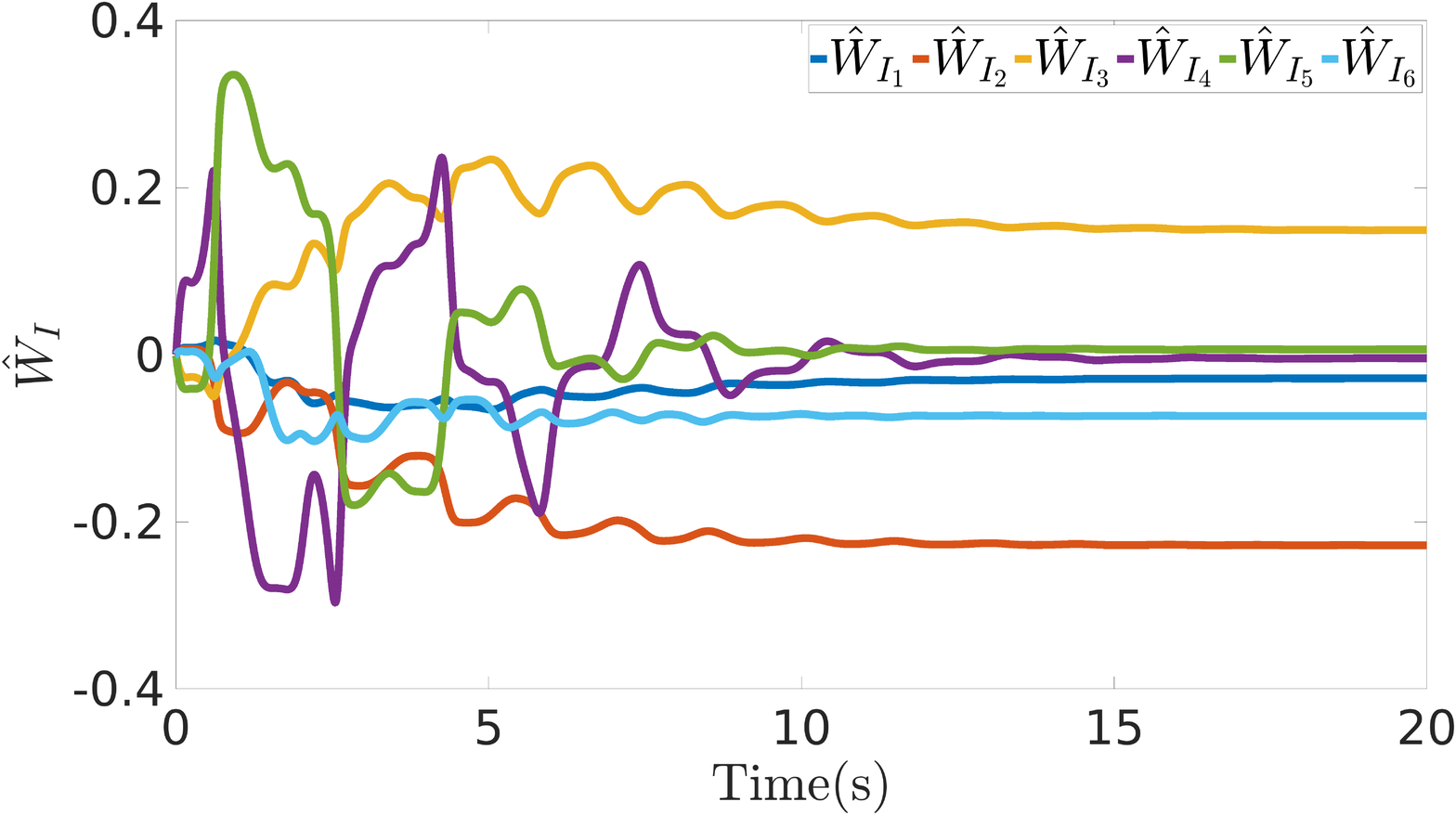}}
{\includegraphics[width=150pt,height=10pc]{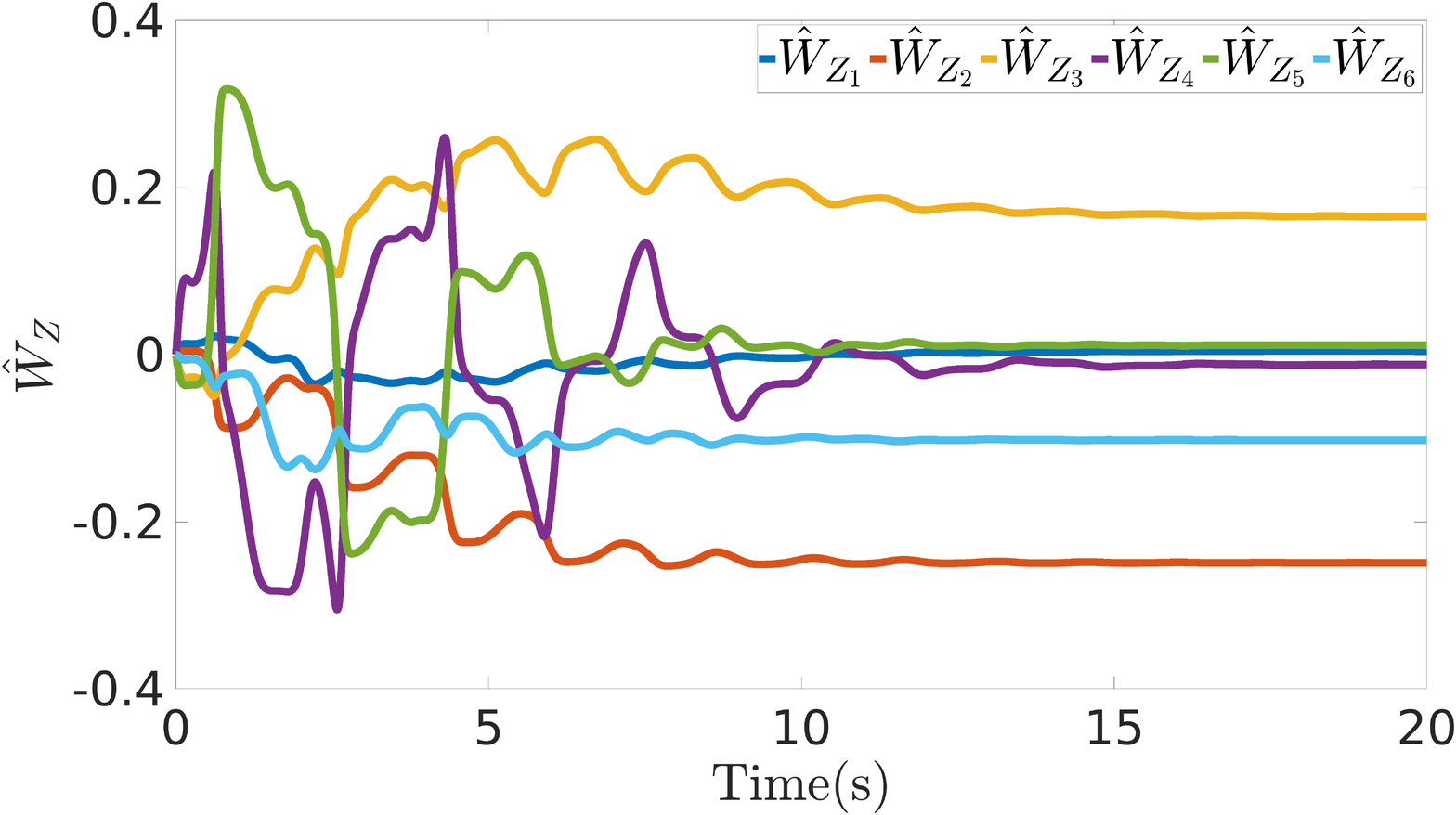}}
{\includegraphics[width=150pt,height=10pc]{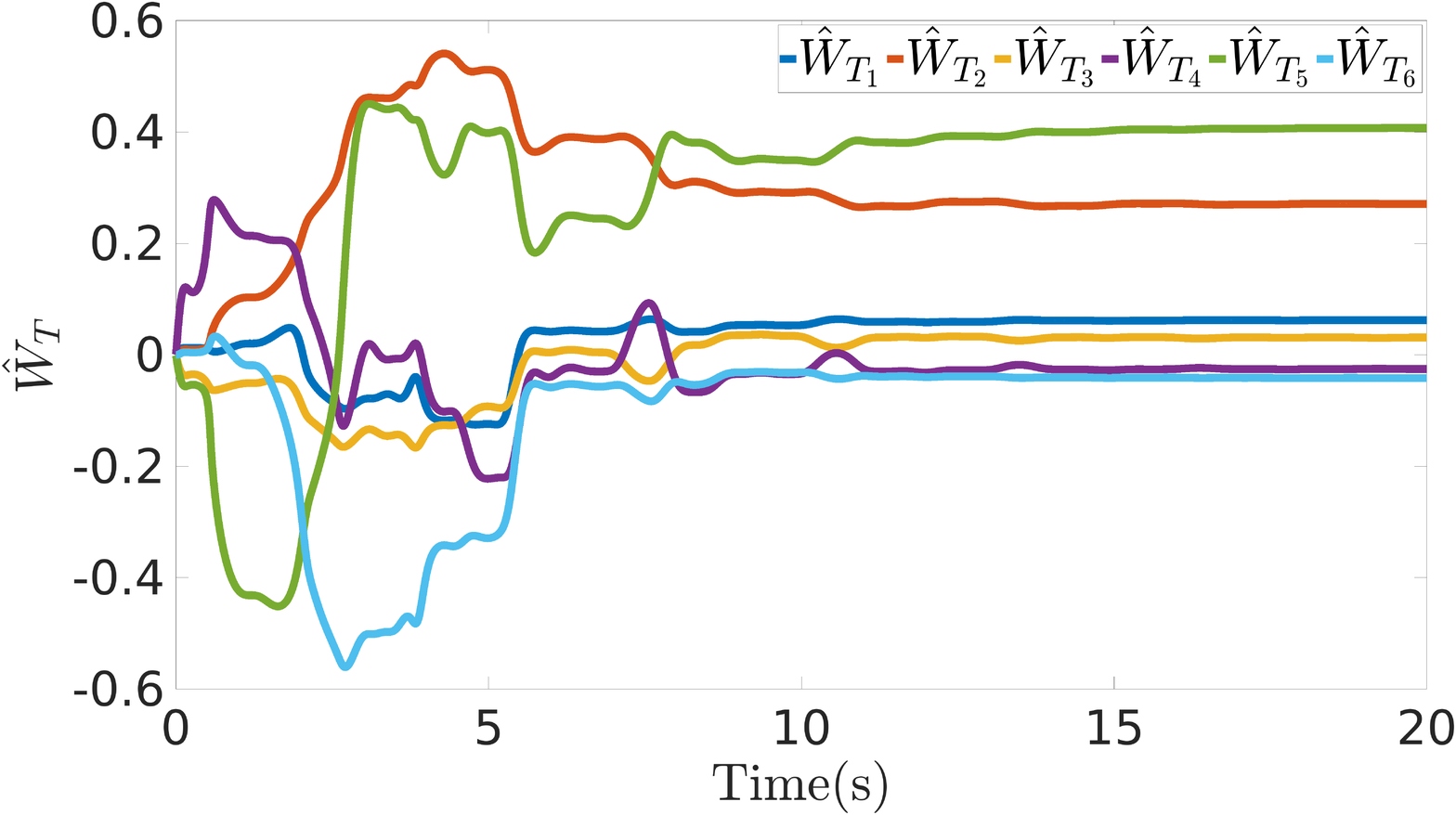}}
\caption{The estimated weight trajectories of IADP ($\hat{W}_I$) , ZSADP ($\hat{W}_Z$), and TADP ($\hat{W}_T$) under the disturbance $d_1(x)$.\label{weight three cases}}
\end{figure}
The state and control trajectories of three cases are shown in Fig.\ref{x and u trajectory}, where the pendulum is successfully driven to the equilibrium point without violating input constraints. However, regarding the peak points of state and control trajectories, the fluctuation range of IADP is smaller than ZSADP and TADP.
\begin{figure}[t]
    \centering
    {\includegraphics[width=150pt,height=10pc]{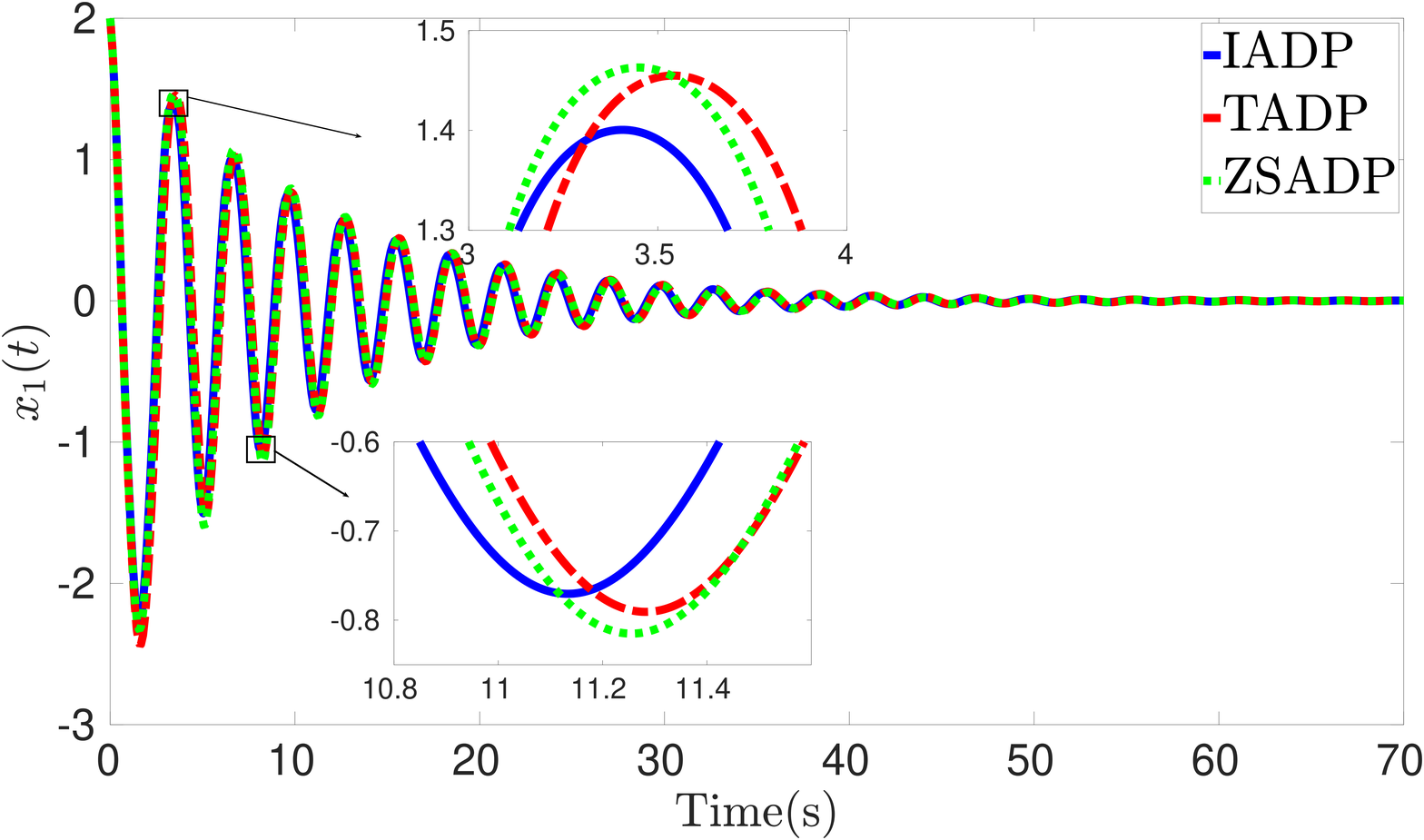}}
   {\includegraphics[width=150pt,height=10pc]{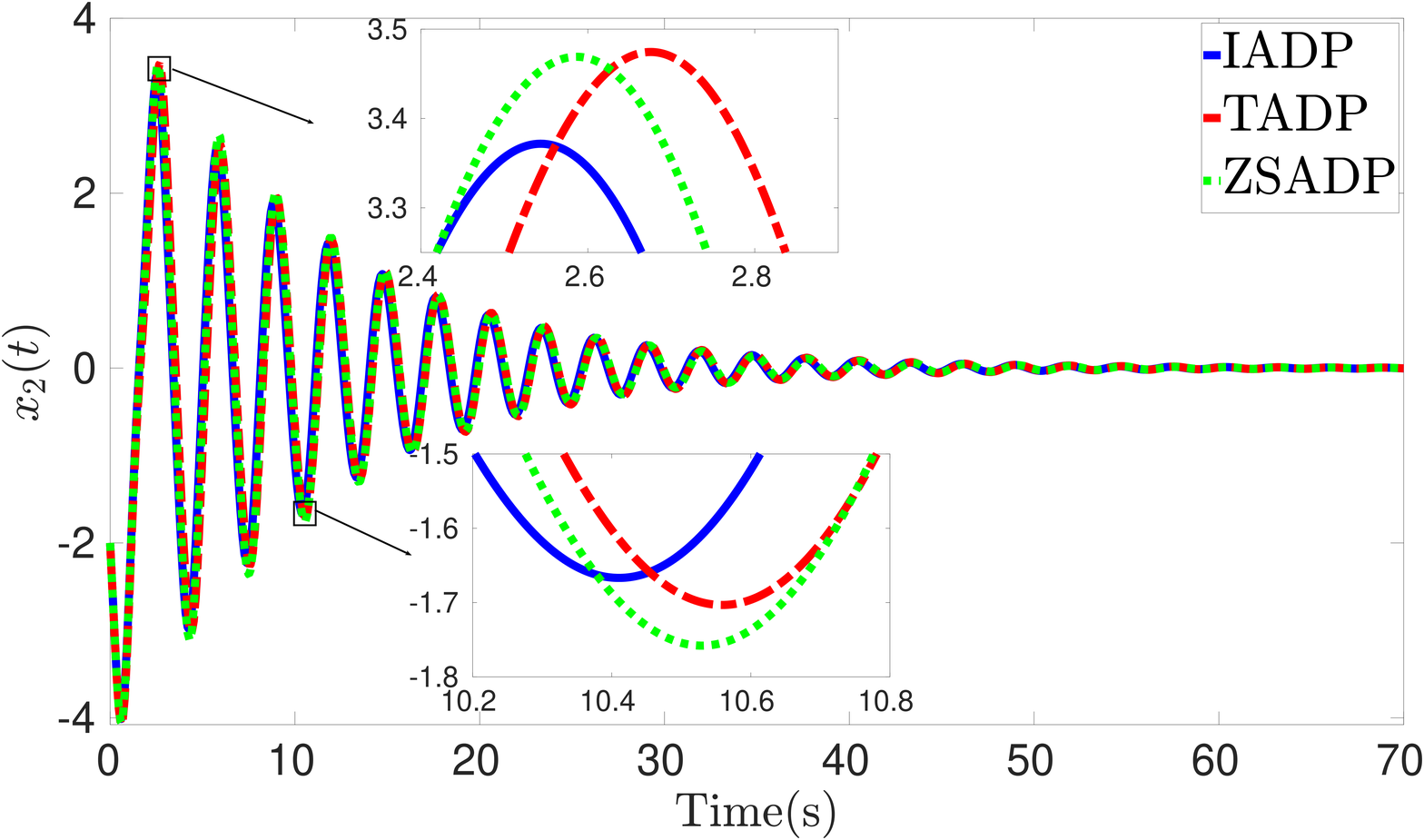}}
   {\includegraphics[width=150pt,height=10pc]{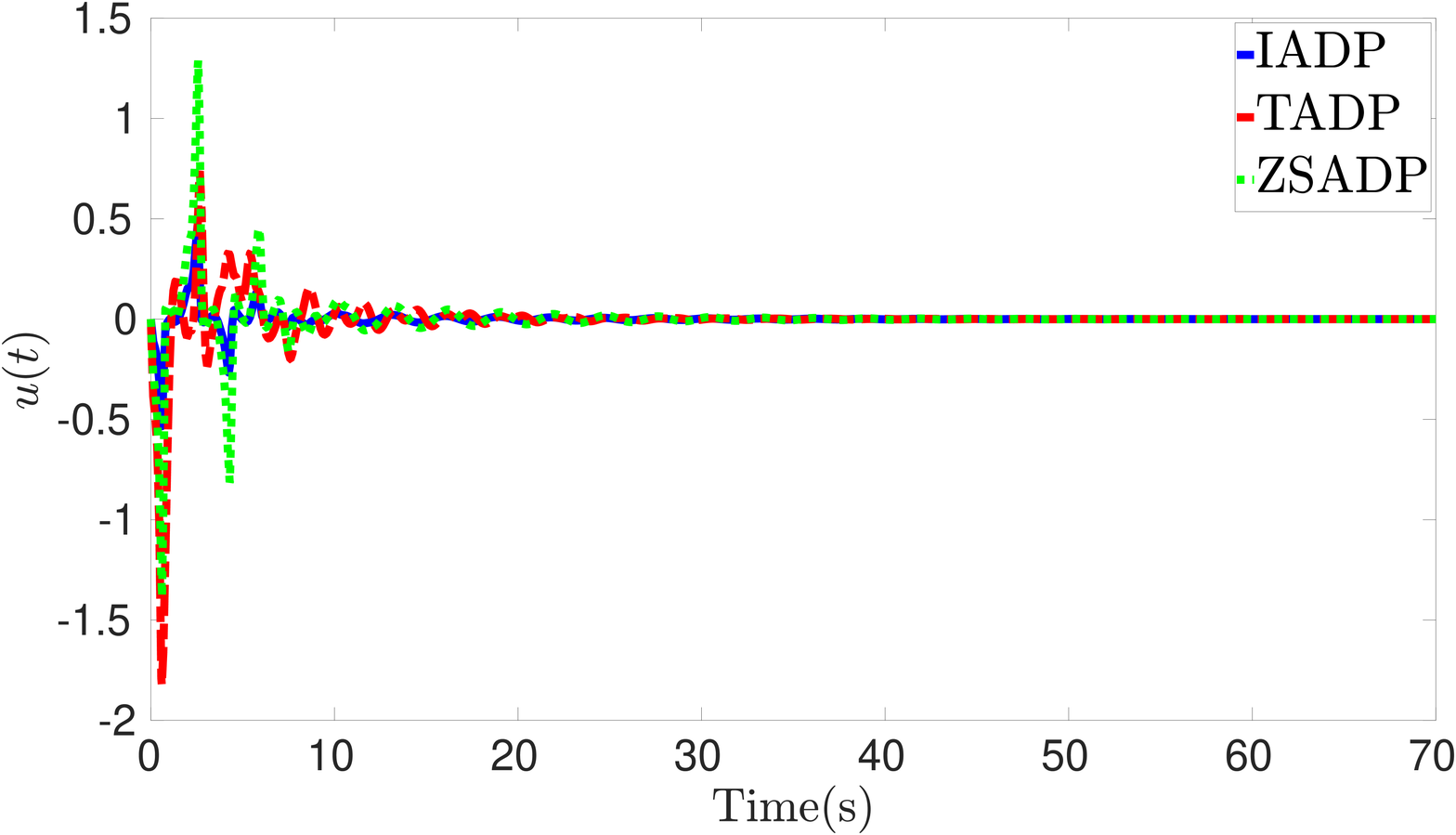}}
    \caption{The state and control trajectories of IADP, ZSADP, and TADP under the vanishing disturbance $d_1(x)$. \label{x and u trajectory}}
\end{figure}

To reveal the superiority of IADP over ZSADP and TADP, we display their corresponding control energy expenditure $E_u = \int_{0}^{\infty}  \left\|\hat{u}\right\|^2  \,d\tau$, and state deviation $E_x = \int_{0}^{\infty}  \left\|x\right\|^2  \,d\tau$ in Fig.\ref{performance trajectory}. 
It is observed in Fig.\ref{performance trajectory} that IADP enjoys a noticeable reduction in utilized control effort, i.e., energy efficiency is highly improved. This makes IADP a more suitable choice for energy-limited platforms. 
This significant decrease in the control effort comes from the achievable optimization of the incremental control inputs. 
Specifically, given $\hat{u}(0)=0$ and IADP prefers a small $\Delta \hat{u}$ at each optimization step, a small $\hat{u}$ is generated to stabilize the pendulum. Thus, we finally get a small $E_u$, which is a cumulative sum of squared $\hat{u}$.
The performance analysis shown in Fig.\ref{performance trajectory} also clarifies the conservativeness of ZSADP and TADP. Although the worst-case disturbance related terms (i.e., $d^{\top}_{Z} d_{Z}$ for ZSADP, $v^{\top}_T v_T$, $l^{2}_{M}$, and $d^{2}_{M}$ for TADP) incorporated into the cost functions \eqref{cost fuction of ZSADP}-\eqref{cost fuction of TADP} allow controller designers to address the additive disturbance $d_1(x)$,
these additionally introduced terms trade off the desired performance indexes of control efforts and state deviations. Thus, a performance compromise problem arises.
\begin{figure}[t]
    \centering
    {\includegraphics[width=200pt,height=12pc]{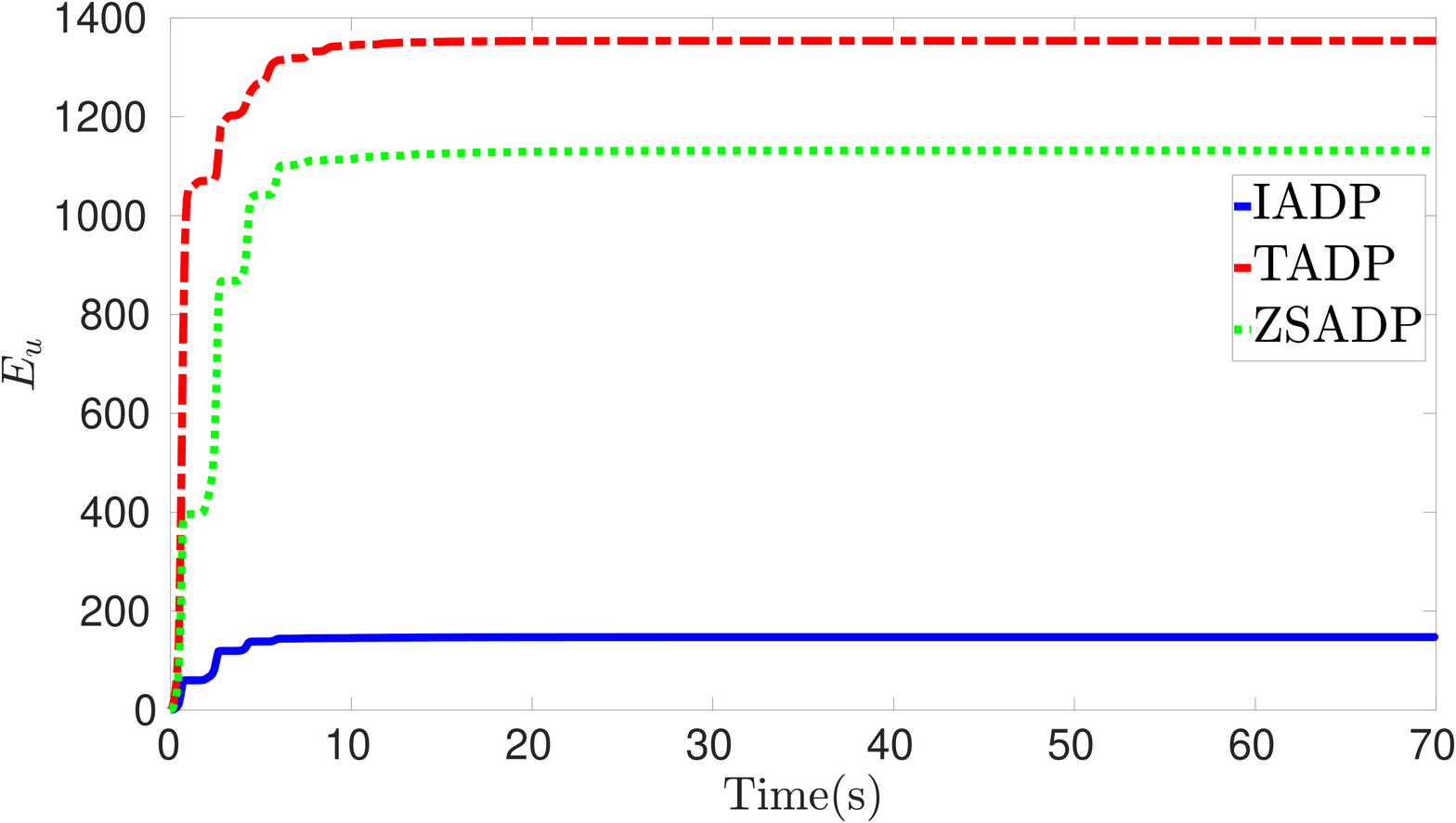}}
   {\includegraphics[width=200pt,height=12pc]{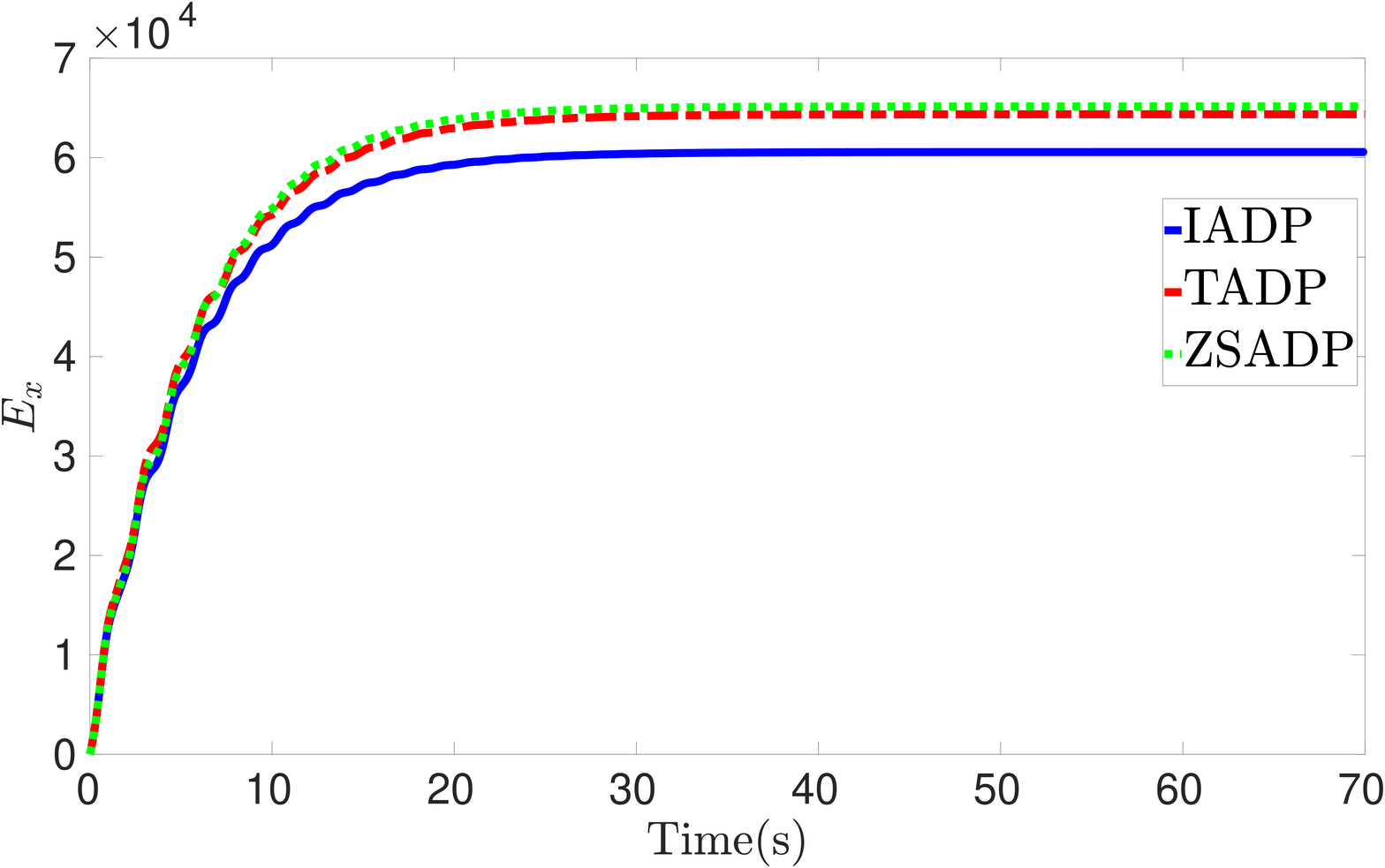}}
    \caption{The performance comparison between IADP, ZSADP, and TADP under the vanishing disturbance $d_1(x)$. \label{performance trajectory}}
\end{figure}
Given the aforementioned simulation results, we know that, even though model-based ZSADP and TADP could provide the rigorous robustness guarantee under worst-case disturbances, they perform poorly than our proposed model-free IADP, especially regarding the control energy expenditure.
\subsubsection{Validation under the non-vanishing disturbance, measurement noise, and physical change}\label{mutliple disturbance}
To further validate the superiority of our proposed IADP over ZSADP and TADP, 
this section conducts comparative simulations under the vanishing disturbance used in Section \ref{sim state dependent disturbance}, as well as the newly introduced non-vanishing disturbance, measurement noise, and sudden physical change.
It is worth noting that ZSADP \cite{vamvoudakis2012online} and TADP \cite{liu2015reinforcement} can only deal with the state-dependent disturbance in a closed-loop form. Thus, here the amplitudes of the chosen non-vanishing disturbance, measurement noise, and physical change are purposely set to be the level that could be tackled by the inherent robustness of ZSADP and TADP. 

Here we follow the time-varying non-vanishing disturbance from \cite{jankovic2018robust}, which is denoted as $d_2(t)$ and set as a square wave with amplitude $0.2$ and period $5 s$.
The measurement noise is chosen as a white Gaussian noise with $50$ dBW.
We add $d_2(t)$ and the measurement noise into the simulation environment during the time from $20 s$ to $60 s$.
To simulate a sudden physical change, e.g., parameter perturbations due to unknown loads put on the pendulum, at $t = 20 s$, the dynamics of pendulum \eqref{pendulum system original system} is randomly reset as
 \begin{equation}\label{pendulum physical changes 1}
        \begin{bmatrix}
            \dot{x}_{1}  \\
            \dot{x}_{2}  \\
        \end{bmatrix} =
        \underbrace{\begin{bmatrix}
            x_{2}  \\
            -2\sin{x_1}-0.1x_2  \\
        \end{bmatrix}}_{f(x)}
        +\underbrace{\begin{bmatrix}
            0  \\
            0.1 \\
        \end{bmatrix}}_{g(x)}u 
          +\underbrace{\begin{bmatrix}
            1 \\
            -0.1  \\
        \end{bmatrix}}_{k(x)}d_1(x).
    \end{equation}
The parameter settings for IADP, ZSADP, and TADP are the same as the settings in Section \ref{sim state dependent disturbance}.

Under the vanishing disturbance $d_1(x)$, the non-vanishing square wave disturbance $d_2(t)$, the white Gaussian measurement noise, and the sudden physical change from \eqref{pendulum system original system} to \eqref{pendulum physical changes 1}, the simulation results are shown in Fig.\ref{x trajectory S2} and Fig.\ref{performance trajectory S2}. The state trajectories displayed in Fig.\ref{x trajectory S2} reveal that three methods all successfully stabilize the pendulum without retuning parameters, i.e., these three methods possess inherent robustness to the aforementioned uncertainties and disturbances in certain amplitudes.
\begin{figure}[t]
    \centering
    {\includegraphics[width=200pt,height=12pc]{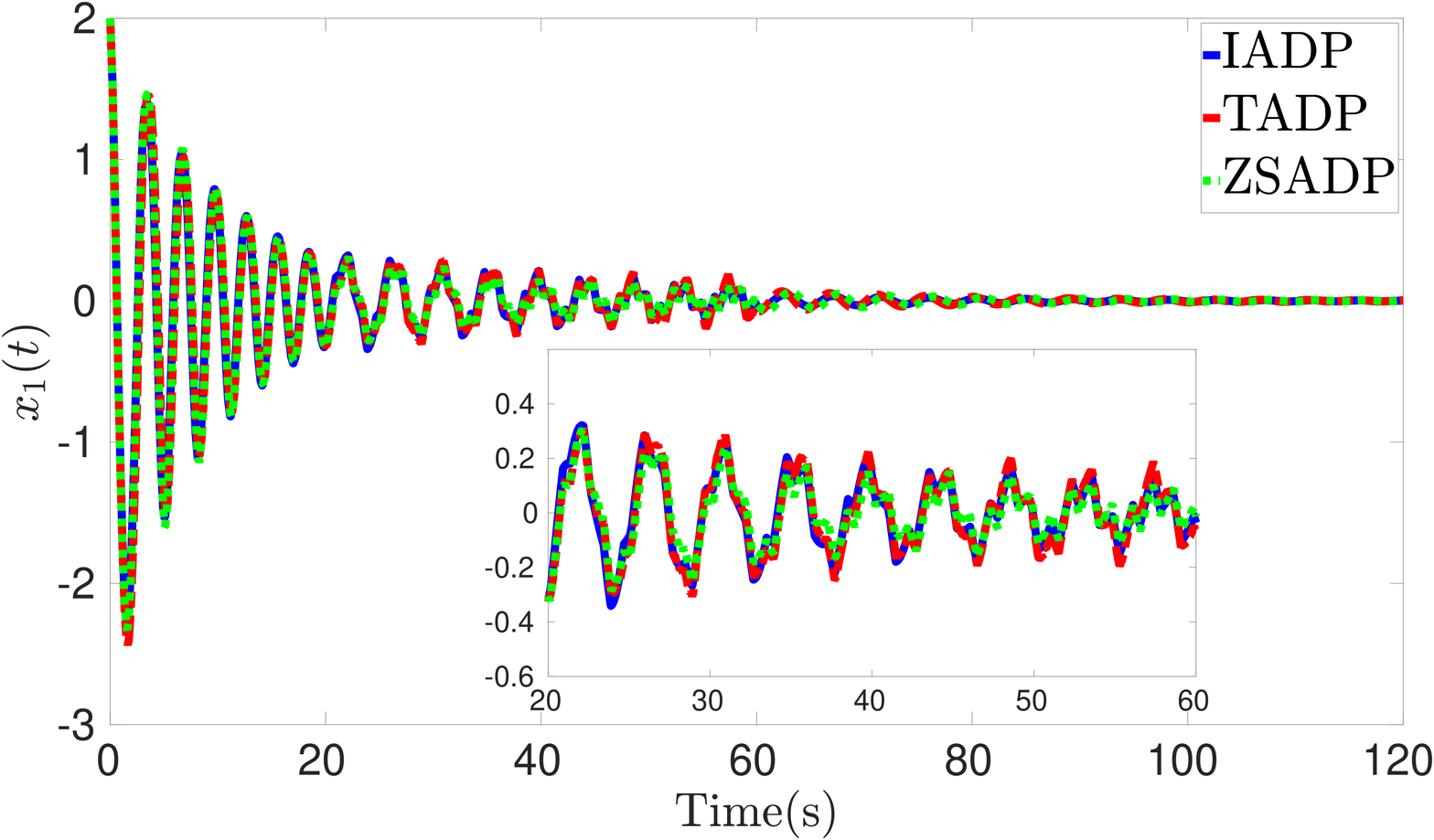}}
   {\includegraphics[width=200pt,height=12pc]{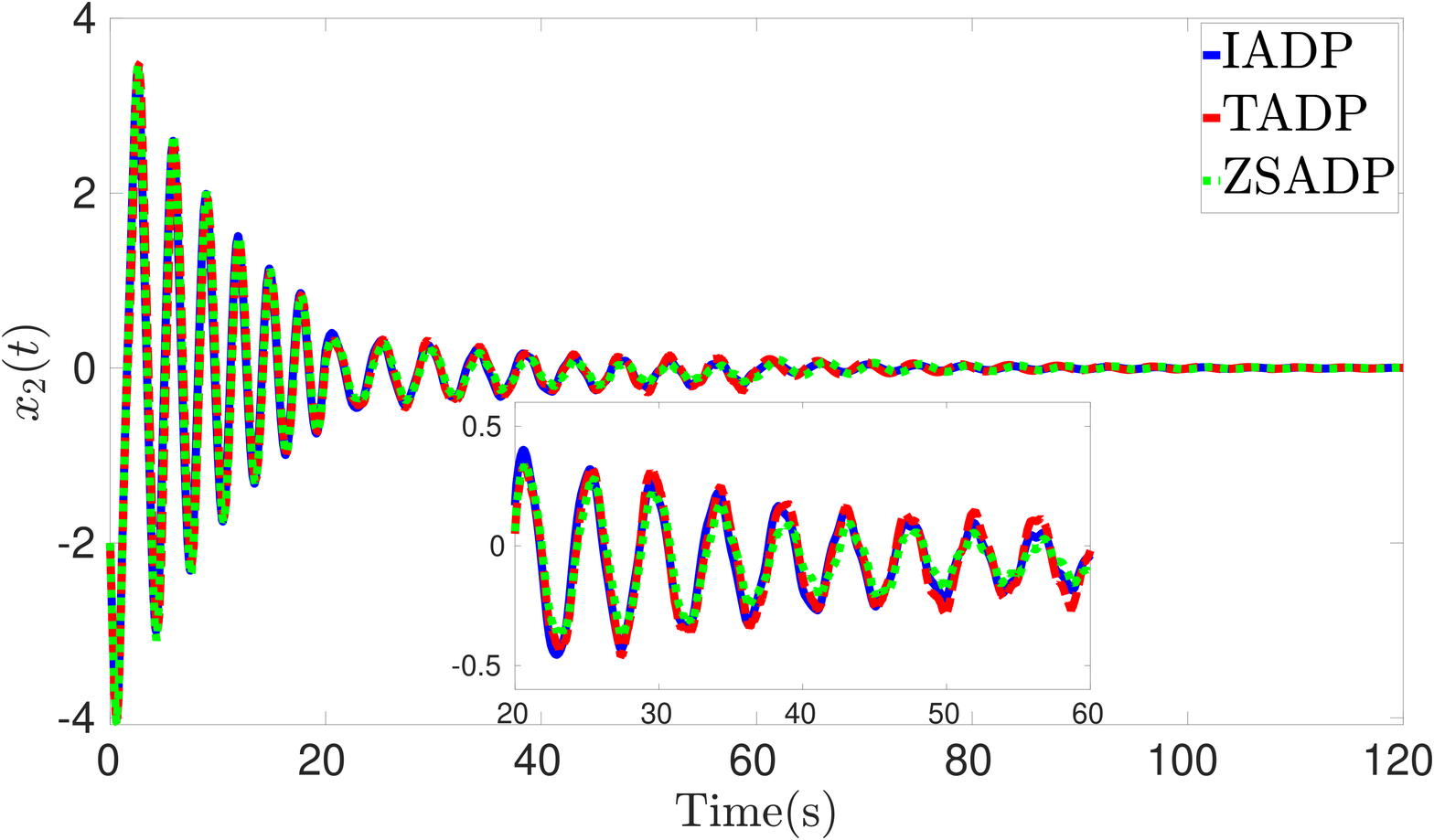}}
    \caption{The state trajectories of IADP, ZSADP, and TADP under the non-vanishing disturbance $d_2(t)$. \label{x trajectory S2}}
\end{figure}
The performance comparison shown in Fig.\ref{performance trajectory S2} further validates the significant control energy deduction of our developed IADP. Comparing to Fig.\ref{performance trajectory} in Section \ref{sim state dependent disturbance}, the increased control effort of IADP results from the required additional control energy to deal with the newly introduced uncertainties and disturbances.
Besides, Fig.\ref{performance trajectory S2} also displays that IADP outperforms ZSADP and TADP in terms of the state deviation $E_x$.
\begin{figure}[t]
    \centering
    {\includegraphics[width=200pt,height=12pc]{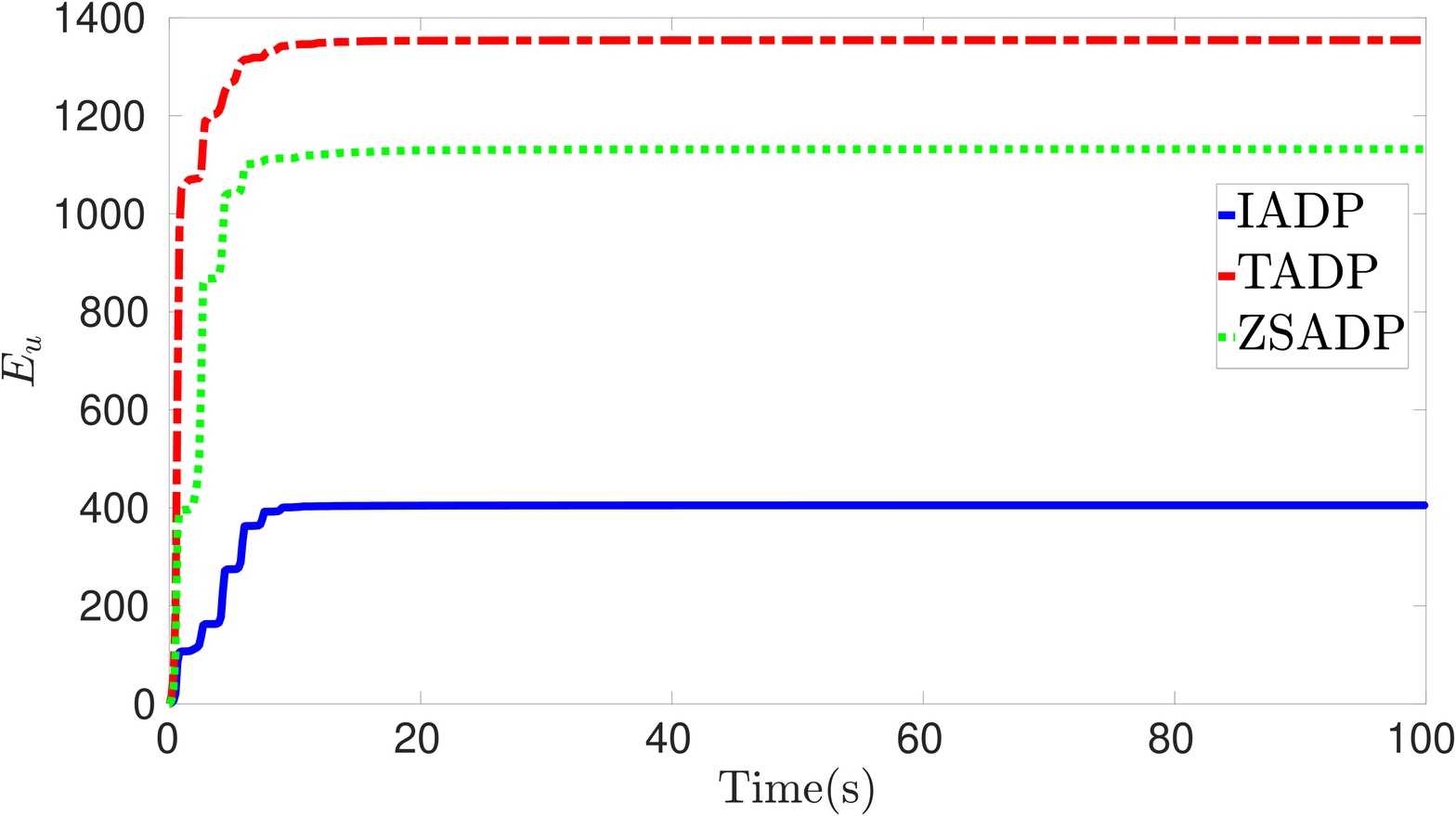}}
   {\includegraphics[width=200pt,height=12pc]{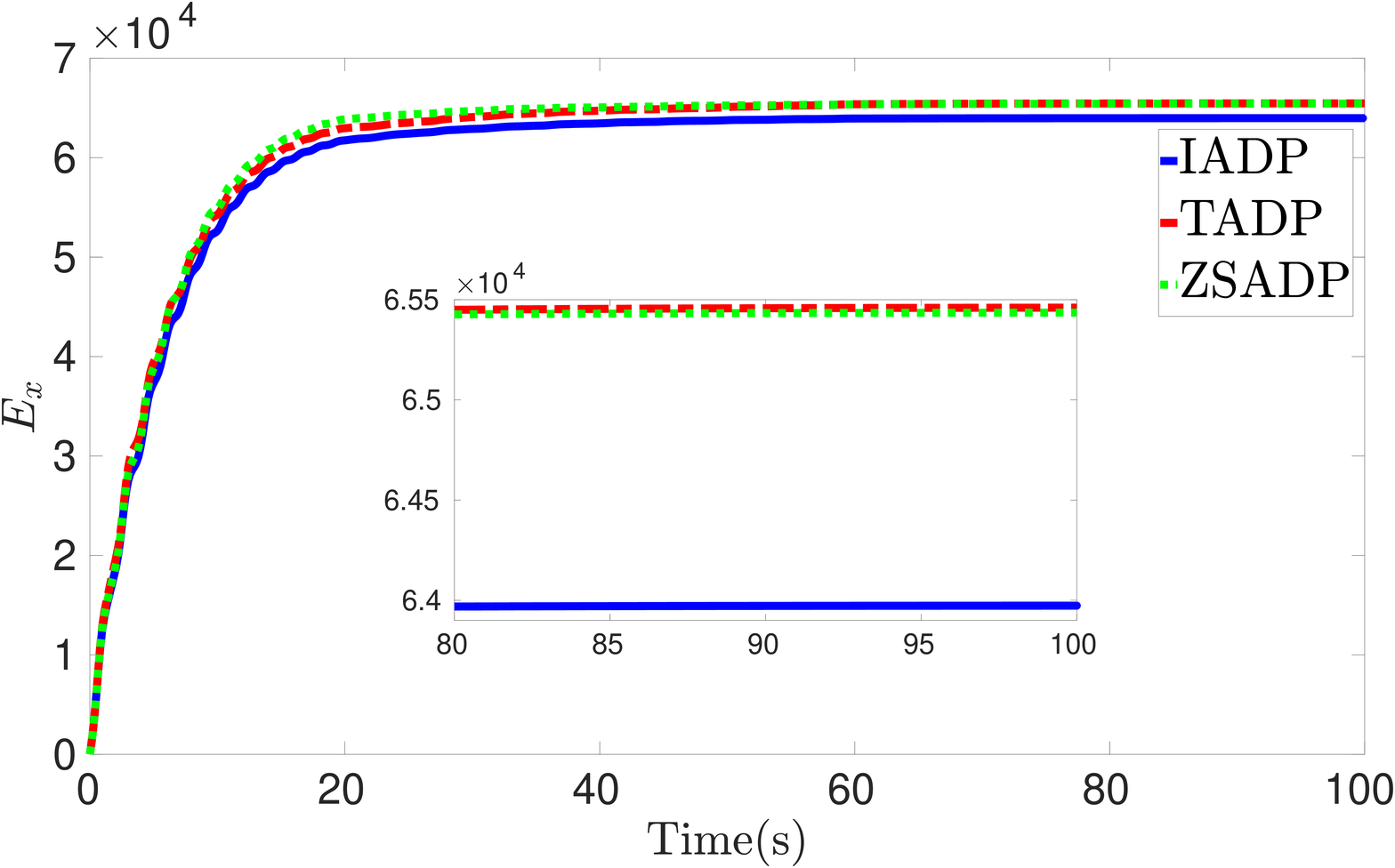}}
    \caption{The performance comparison between IADP, ZSADP, and TADP under the non-vanishing disturbance $d_2(t)$. \label{performance trajectory S2}}
\end{figure}
\subsection{Validation of the enhanced robustness of IADP} \label{Sim Part 2}
To highlight the enhanced robustness of our proposed IADP, this section conducts numerical simulations under a more complex simulation environment comparing to Section \ref{mutliple disturbance}. The details are as follows:
during the time from $20 s$ to $60 s$, the added non-vanishing disturbance $d_3(t)$ is a square wave with amplitude $0.5$ and period $1 s$, whose amplitude and frequency are both improved comparing to $d_2(t)$ used in Section \ref{mutliple disturbance}; the incorporated measurement noise is set as a white Gaussian noise with $10$ dBW, whose magnitude is $5$ times larger than the one chosen in Section \ref{mutliple disturbance}.
Besides, to model a significant physical change, at $t = 20 s$, the pendulum \eqref{pendulum system original system} is reset as
 \begin{equation}\label{pendulum physical changes}
        \begin{bmatrix}
            \dot{x}_{1}  \\
            \dot{x}_{2}  \\
        \end{bmatrix} =
        \underbrace{\begin{bmatrix}
            -x_{2}  \\
            4.9\sin{x_1}-0.2x_2  \\
        \end{bmatrix}}_{f(x)}
        +\underbrace{\begin{bmatrix}
            0  \\
            -0.25 \\
        \end{bmatrix}}_{g(x)}u 
          +\underbrace{\begin{bmatrix}
            1 \\
            -0.2  \\
        \end{bmatrix}}_{k(x)}d_1(x).
    \end{equation}
Comparing to Section \ref{mutliple disturbance}, the simulated physical change here is more aggressive by inverting the sign of model parameters.
The parameter settings for IADP, ZSADP, and TADP follow the settings in Section \ref{sim state dependent disturbance}.

The estimated weight trajectory of IADP shown in Fig.\ref{weight u trajectory S3} illustrates that under multiple sources of uncertainties and disturbances, our proposed off-policy weight update law \eqref{w update law} enables us to collect the real-time data in time and finally achieve the weigh converge.
The control trajectories shown in Fig.\ref{weight u trajectory S3}, and the state trajectories displayed in Fig.\ref{x trajectory S3} clarify the enhanced robustness of IADP. Specifically, IADP  successfully stabilizes the pendulum system under multiple sources of uncertainties and disturbances, however, the robustness of ZSADP and TADP are not enough to tackle such a complex environment. Thus, the control inputs and states of ZSADP and TADP diverge far away immediately after the simulation environment significantly changes at $t = 20s$. 
\begin{figure}[t]
    \centering
    {\includegraphics[width=200pt,height=12pc]{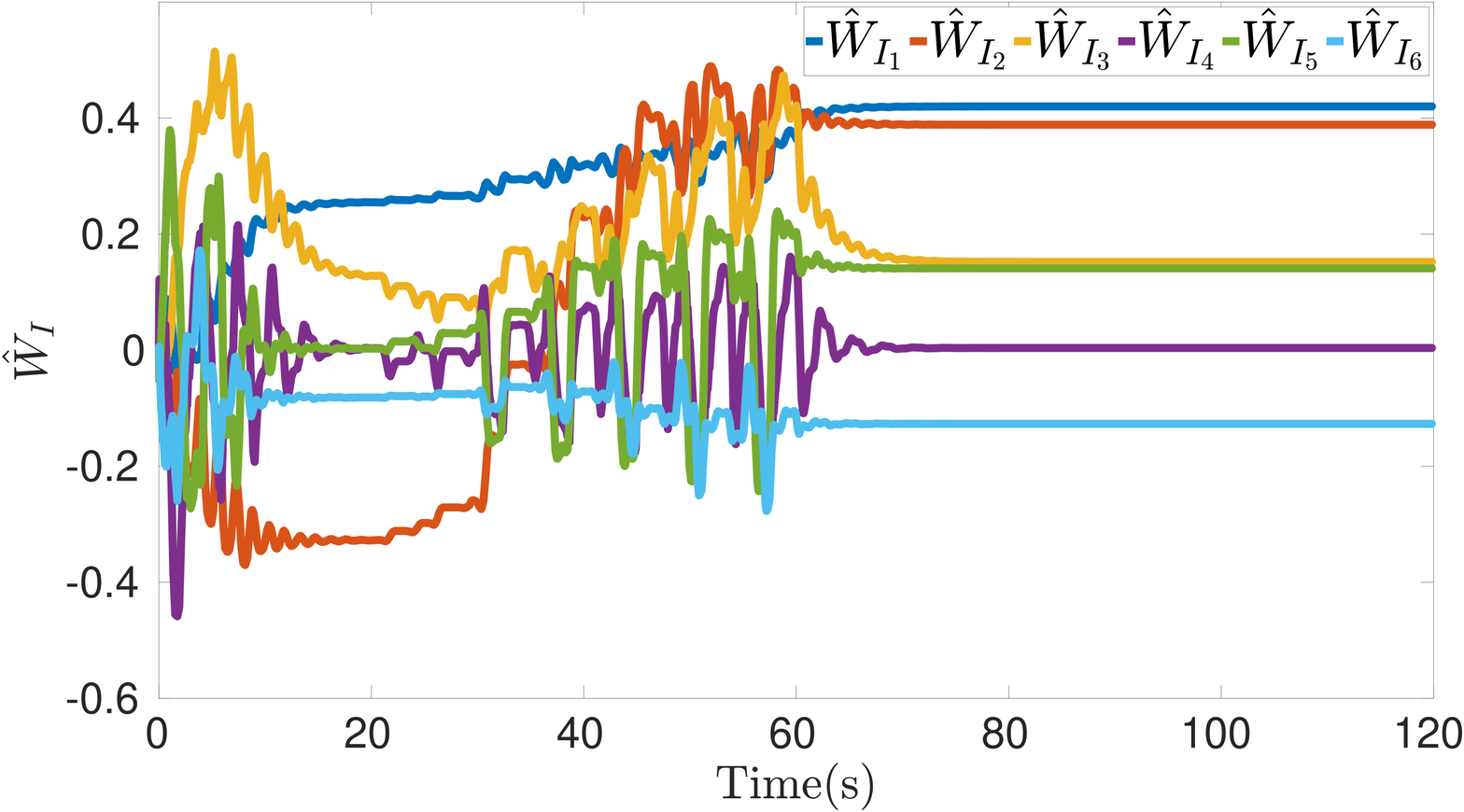}}
   {\includegraphics[width=200pt,height=12pc]{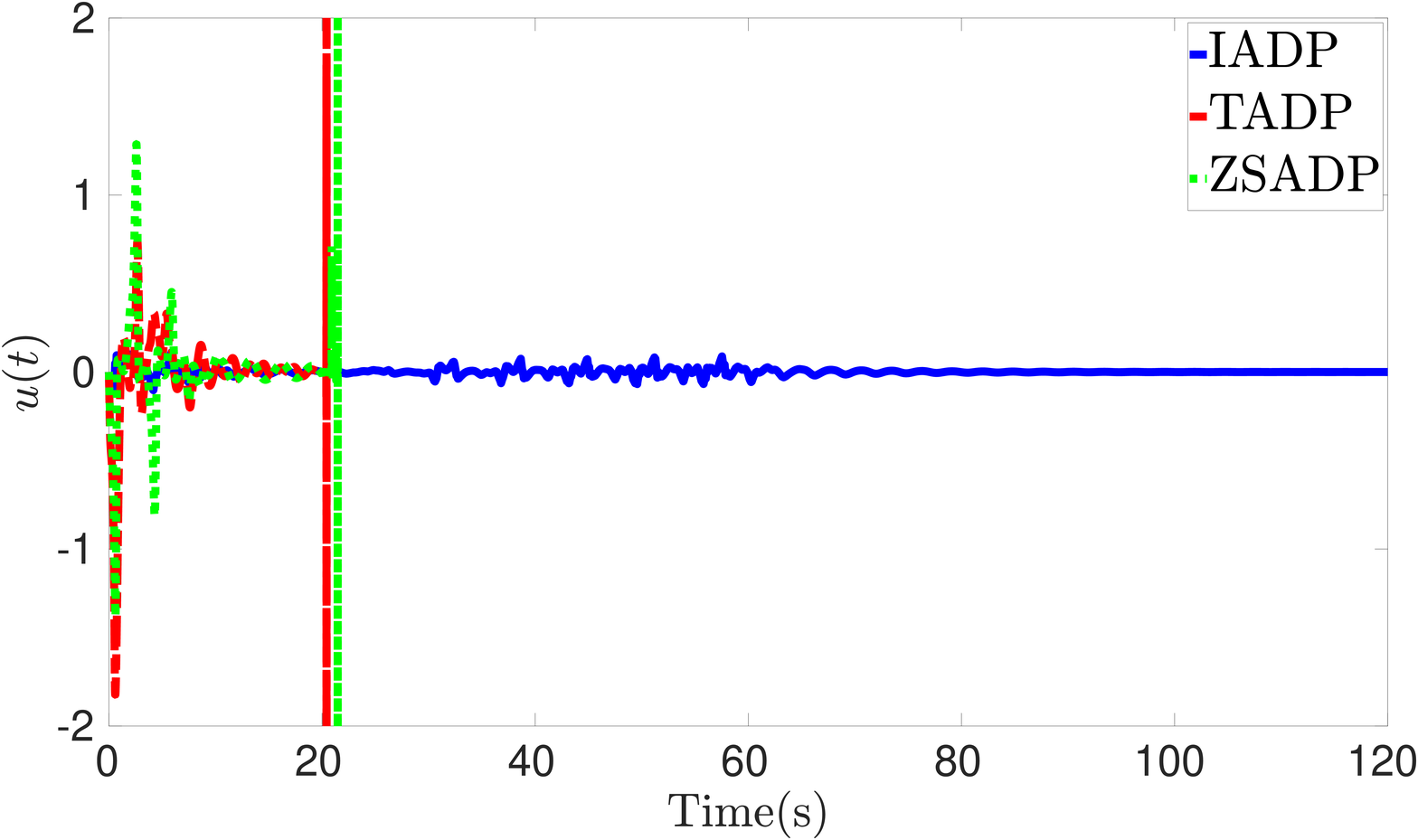}}
    \caption{The estimated weight trajectory of IADP and the control trajectories of IADP, ZSADP, and TADP.\label{weight u trajectory S3}}
\end{figure}

\begin{figure}[t]
    \centering
    {\includegraphics[width=200pt,height=12pc]{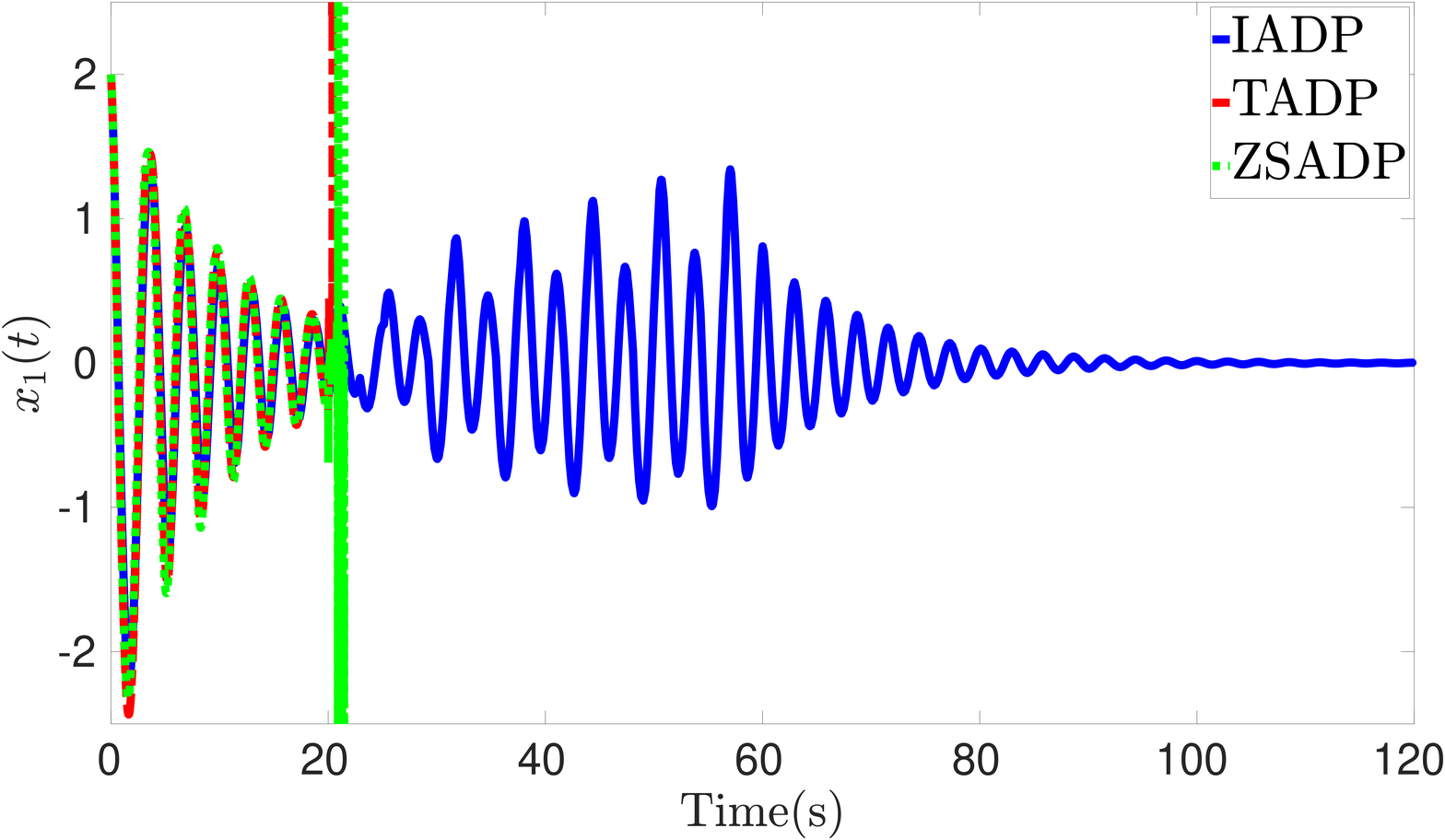}}
   {\includegraphics[width=200pt,height=12pc]{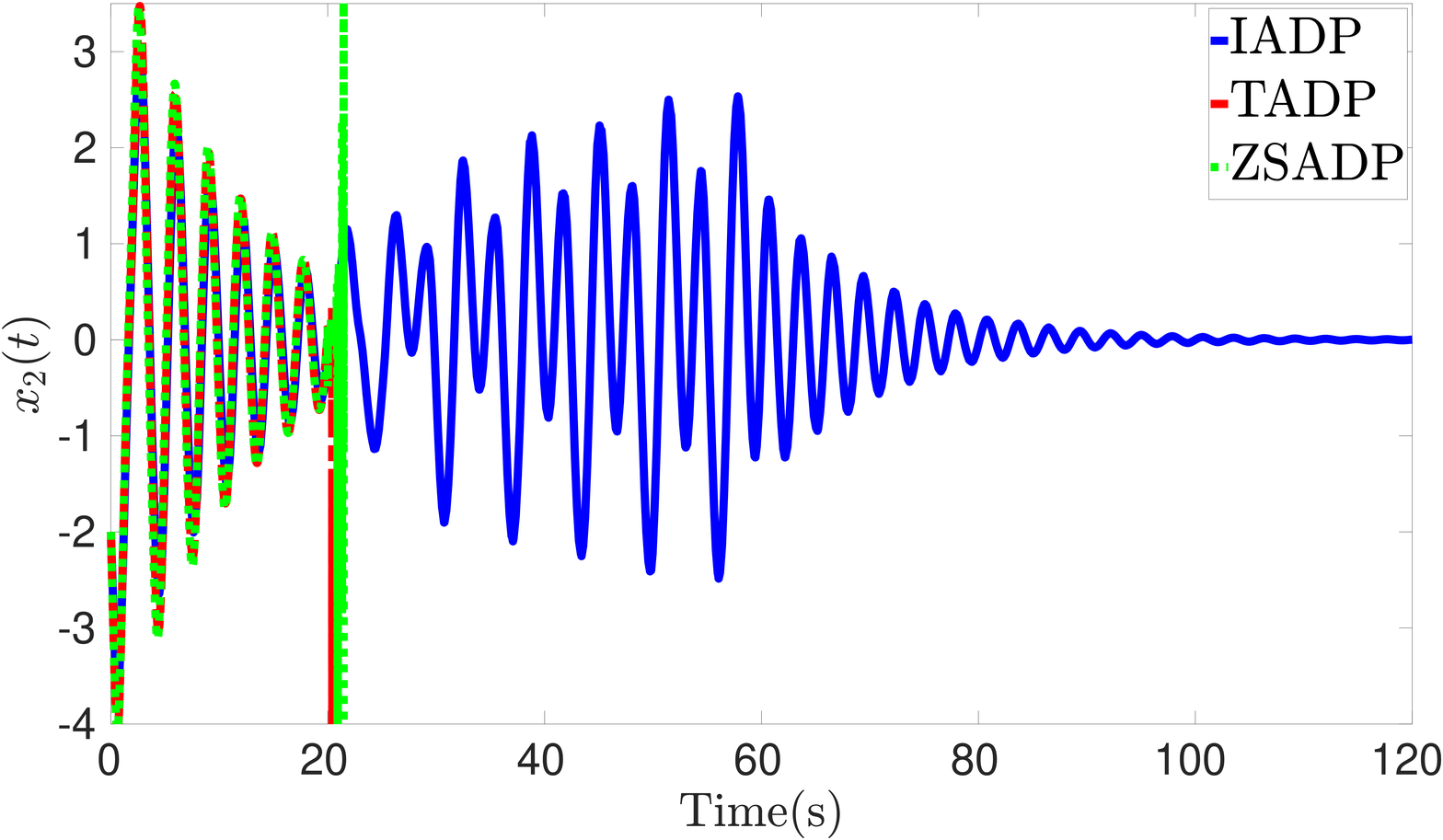}}
    \caption{The state trajectories of IADP, ZSADP, and TADP under a complex simulation environment.\label{x trajectory S3}}
\end{figure}

\section{Conclusion}\label{sec conclusion}
The paper designs an efficient and low-cost model-free control strategy for robust optimal stabilization of continuous-time nonlinear systems. 
To reduce dependence on accurate mathematical models, the TDE technique permits us to get an measured data based incremental dynamics, which is an equivalent of the original dynamics, without requiring any explicit model knowledge or any computation-intensive identification procedures.
Then, the HJB equation, which is constructed based on the incremental dynamics, is approximately solved through a single critic structure. The resulting approximate optimal incremental control strategy stabilizes the controlled system incrementally.
Besides, by transforming the critic ANN weight learning as a parameter identification process and further exploiting the collected experience data, we develop an efficient weight update law with guaranteed weight convergence.
The simultaneous consideration of stability, optimality, and robustness, the introduced simplified single critic structure, and the easily implemented off-policy weight update law make our proposed IADP be promising for practical applications.
Multiple conducted numerical simulations have shown that our proposed IADP outperforms common ADP methods in terms of the reduced control efforts and the enhanced robustness.
Future research works will extend our developed IADP to the robot manipulator optimal tracking control problem with experimental implementations.
In addition, since the efficacy of IADP depends on accurate sensor measurements, we will investigate and address the influence of sensor biases or delays to IADP.

\bibliographystyle{unsrt}  
\bibliography{references}

\end{document}